\newtheorem{theorem}{Theorem}[section]
\newtheorem{lemma}[theorem]{Lemma}
\theoremstyle{definition}
\newtheorem{definition}{Definition}[section]
\newcommand{\BB}{\textsc{BB}\xspace}
\newcommand{\CFG}{\textsc{CFG}\xspace}
\newcommand{\DAG}{\textsc{DAG}\xspace}
\newcommand{\XDD}{\textsc{XDD}\xspace}
\newcommand{\XG}{\textsc{XG}\xspace}
\newcommand{\N}{\mathbb{N}\xspace}
\newcommand{\Z}{\mathbb{Z}\xspace}
\newcommand{\inst}{I}
\newcommand{\Insts}{\mathcal{I}\xspace}
\newcommand{\Stages}{\mathcal{P}\xspace}
\newcommand{\stage}{s\xspace}
\newcommand{\D}{\mathcal{D}\xspace}
\newcommand{\Events}{\ensuremath{\mathcal{E}}\xspace}
\newcommand{\Gxg}{\ensuremath{G_{XG}}\xspace}
\newcommand{\Vxg}{\ensuremath{V_{XG}}\xspace}
\newcommand{\Exg}{\ensuremath{E_{XG}}\xspace}
\newcommand{\vxg}{\ensuremath{v}\xspace}
\newcommand{\wxg}{\ensuremath{w}\xspace}
\newcommand{\PipeStates}{\ensuremath{\mathcal{S}}}
\newcommand{\LEAF}{\ensuremath{\text{\textsc{leaf}}}\xspace}
\newcommand{\NODE}{\ensuremath{\text{\textsc{node}}}\xspace}
\newcommand{\SF}{\ensuremath{f}\xspace}
\newcommand{\Zero}{\mathbb{0}}
\newcommand{\One}{\mathbb{1}}
\newcommand{\leftME}{\ensuremath{\blacktriangleleft_{ME}}\xspace}
\newcommand{\leftFE}{\ensuremath{\blacktriangleleft_{FE}}\xspace}
\newcommand{\rhohat}{\hat{\rho}}
\newif\ifwithnotes
\newcommand{\zhen}[1]{\ifwithnotes\todo[color=green!25!white]{#1}\fi}
\newcommand{\tho}[1]{\ifwithnotes\todo[color=red!25!white]{#1}\fi}
\begin{document}
\nolinenumbers

\title{Computing Execution Times with eXecution Decision Diagrams in the Presence \\  of Out-Of-Order Resources}

\author{
    \IEEEauthorblockN{Zhenyu Bai},
    \IEEEauthorblockN{Hugues Cassé},
    \IEEEauthorblockN{Thomas Carle},
    \IEEEauthorblockN{Christine Rochange}
    \IEEEauthorblockA{ \\
        IRIT, Université de Toulouse, CNRS, Toulouse INP, UT3 \\
        118 route de Narbonne 31062 Toulouse, France \\
        Email: \textit{firstname.surname}@irit.fr
    }
}

\markboth{Journal of \LaTeX\ Class Files,~Vol.~14, No.~8, August~2021}%
{Shell \MakeLowercase{\textit{et al.}}: A Sample Article Using IEEEtran.cls for IEEE Journals}


\maketitle
\begin{abstract}
Worst-Case Execution Time (WCET) is a key component for the verification of critical real-time applications.
Yet, even the simplest microprocessors implement pipelines with concurrently-accessed resources, such as the memory bus shared by fetch and memory stages.  Although their \emph{in-order} pipelines are, by nature, very deterministic, the bus can cause out-of-order accesses to the memory and, therefore, \emph{timing anomalies}: local timing effects that can have global effects but that cannot be easily composed to estimate the global WCET. To cope with this situation, WCET analyses have to generate important over-estimations in order to preserve safety of the computed times or have to explicitly track all possible executions. In the latter case, the presence of out-of-order behavior leads to a combinatorial blowup of the number of pipeline states for which efficient state abstractions are difficult to design. This paper proposes instead a compact and exact representation of the timings in the pipeline, using eXecution Decision Diagram (\XDD)\cite{bai_improving_2020}.  We show how \XDD can be used to model pipeline states all along the execution paths by leveraging the algebraic properties of \XDD. This computational model allows to compute the exact temporal behavior at \emph{control flow graph} level and is amenable to efficiently and precisely support WCET calculation in presence of out-of-order bus accesses. This model is finally experimented on the TACLe benchmark suite and we observe good performance making this approach appropriate for industrial applications.
\end{abstract}

\begin{IEEEkeywords}
real-time, WCET, static analysis, pipeline
\end{IEEEkeywords}

\section{Introduction}
The correct behavior of hard real-time systems depends not only on its functional behavior but also on its temporal behavior. The latter is guaranteed by the scheduling analysis of the tasks composing the system, which relies on the estimation of their Worst-Case Execution Time (WCET).

With modern processors, the execution time of a code snippet is difficult to determine. For instance, on a processor equipped with cache memories, the latency of memory accesses is variable: it depends on whether the access results in a \emph{Cache Miss} or a \emph{Cache Hit}. Although the latency of memory access it-self is statically known (i.e. the latency in case of a Miss), it cannot be easily accounted when computing the execution time. 
The presence, in modern micro-architectures, of pipelined and superscalar execution and other mechanisms to favor instruction-level parallelism achieving high performance causes a large variability in the execution times and makes the WCET analysis to suffer from \emph{timing anomalies}\cite{lundqvist_timing_1999, eisinger_automatic_2006, gebhard2010timing, cassez-timing-anomalies}. Briefly, timing anomalies state that the WCET analysis cannot assert a local worst case with a constant worst case temporal effect. Illustrated on the case of cache accesses, this means that Cache Miss (longer access) cannot be asserted to be the global worst case and no constant worst time contribution to the global WCET can be determined~\cite{wenzel_principles_2005,reineke_definition_2006}. 

Unless the target processor is proved to be timing-anomalies free, a safe and precise WCET analysis has to capture them by precisely tracking the execution states of the micro-architecture. Hence the WCET computation is generally decomposed into two parts~\cite{wilhelm_static_2010}. Firstly, \emph{global analyses}, independent from pipeline's structure, are performed: they typically encompass cache and branch-prediction analyses, which determine the behavior of these mechanisms at instruction level. Secondly, the \emph{pipeline analysis} uses the information provided by \emph{global analyses} to determine how instructions are executed through the pipeline, and to determine the (worst-case) execution times of instruction sequences.

In~\cite{bai_improving_2020}, Bai et al. show that the eXecution Decision Diagram (\XDD) is good data structure to record times in pipeline analysis. An \XDD can be deemed as a lossless compression of the relationship between the execution time of an instruction sequence and the combination of the occurrence of timing variations. By implanting \XDD into the pipeline model based on the Execution Graph~(\XG), they have achieved exact and efficient pipeline analysis on sequentially executed instructions in in-order pipelines. 
In this~\cite{bai_improving_2020}, the pipeline analysis is designed to consider \emph{Basic Blocks} (\BB) of program independently by calculating their worst-case execution context. However, with the presence of out-of-order resources like shared buses, the conservative use of a worst-case context does not hold any more. We need to precisely track the possible execution contexts in order to evaluate how the concurrent accesses to the bus are interlaced. The pipeline analysis has to analyze the micro-architecture states on the whole of program i.e. at Control-Flow Graph~(CFG) level. 

\paragraph{Contribution}
This paper shows (a) how we adapt the original graph based pipeline model proposed in~\cite{bai_improving_2020} into a data-flow analysis applied at CFG level which computes exactly all possible temporal pipeline states; (b) how, by leveraging the algebraic properties of \XDD, we construct an efficient computational model of our analysis; and (c) how we exploit the precise pipeline states produced by this computational model to support a typical out-of-order resource: the shared memory bus between instruction cache and data cache to access the main memory.

\paragraph{Outline}
Section~\ref{sec:background} presents the background knowledge about the \XG model and \XDD. In section~\ref{sec:resource-based-model}, we extend the original model of \XG with \XDD to a resource-based model which is able to express the state of pipeline with a vector. Later in this section, we show how to leverage the algebraic properties of \XDD in order to ameliorate the performance of the analysis. In section~\ref{sec:analysis-on-cfg}, we show how to build the complete analysis at CFG level. Section~\ref{sec:shared-bus} extends our model to support the shared memory bus.  Experimentation in section~\ref{sec:experiments} demonstrates the efficiency of our analysis on realistic benchmarks. Several metrics are examined and discussed. Related works are presented in section~\ref{sec:related} and we conclude in Section~\ref{sec:conclusion}.



\section{Background}
\label{sec:background}

As the analyzed program has several execution paths and possibly loops, it is impossible to track all the possible execution traces.
The static WCET analysis approach we use in this paper models the whole program as a \CFG, then computes the WCET of each \BB and determines the WCET using the \emph{Implicit Path Enumeration Techniques} (IPET)~\cite{li_performance_1995}. Thus, the pipeline analysis aims to determine the execution time of each \BB, for example, using \emph{Execution Graphs}.

\subsection{Execution Graphs}
\label{sec:XG}
An eXecution Graph~(\XG)~\cite{xianfeng_li_modeling_2004,rochange_context-parameterized_2009} models the temporal behavior of an instruction sequence (like a \BB) executed in the pipeline. The key idea of \XG is to model the temporal behavior by considering the dependencies arising between instructions during their execution in the pipeline stages. 
For~example, an~instruction have to exit a pipeline stage to start its execution in the next stage, an instruction have to read a register after another instruction has written this register, etc. This results in a dependency graph: a vertex represents the progress of an instruction in a pipeline stage; the edges represent the precedence relationships between these vertices. 
Formally, let $\Insts$ be the set of machines instructions, and let ~\XG be a \emph{Directed Acyclic Graph} (\DAG) $\Gxg = \langle \Vxg, \Exg \rangle$ built for an instruction sequence $Seq \in \Insts^*$ s.t.
\begin{itemize}
    \item $\Vxg$ is the set of vertices defined by $\Vxg = \{ [\inst_i / \stage] ~|~ \inst_i \in Seq \land \stage \in \Stages \}$, with $\Stages$ the set of pipeline stages.
    \item $\Exg \subset \Vxg \times\Vxg$, the set of edges, is built according to the dependencies in the considered pipeline.
\end{itemize}

In addition, an \XG is decorated with temporal information:
\begin{itemize}
    \item $\lambda_v \in \mathbb{N}$ is the latency of the \XG vertex $v$, that is, the time spent in this vertex.
    \item $\delta_{v \to w} \in \{ 0, 1 \}$ represents the effect of dependencies of the edge $v \to w \in \Exg$. If $\delta_{v \to w} = 1$ (solid), $w$ starts after the end of $v$; if $\delta_{v \to w} = 0$ (dotted), $w$ can start at the same time or after the start of $v$.
\end{itemize}

\newcommand{\XGTikz}{
	\node at(0cm, 0cm) {FE};
	\node at(1.5cm, 0cm) {DE};
	\node at(3cm, 0cm) {EX};
	\node at(4.5cm, 0cm) {ME};
	\node at(6cm, 0cm) {WB};
	
	\def\ix{-1cm}
	\def\iw{4cm}
	\node[align=left,text width=\iw] at(\ix, -0.5cm)	{\footnotesize ($I_0$) add r3, r0, \#4};
	\node[align=left,text width=\iw] at(\ix, -1.25cm)	{\footnotesize ($I_1$) add r1, r0, r1, lsl \#2};
	\node[align=left,text width=\iw] at(\ix, -2cm)		{\footnotesize($I_2$) ldr r2, [r3]};
	\node[align=left,text width=\iw] at(\ix, -2.75cm)	{\footnotesize($I_3$) cmp r2, ip};
	\node[align=left,text width=\iw] at(\ix, -3.5cm)	{\footnotesize($I_4$) ldrgt ip, [r3]};
	\node[align=left,text width=\iw] at(\ix, -4.25cm)	{\footnotesize($I_5$) add r3, r3, \#4};
	
	\foreach \i in {0,...,5}{
		\foreach \s/\j in {FE/0,DE/1,EX/2,ME/3,WB/4}{
			\node[fill=white,draw] at(1.5cm * \j,-0.5cm - .75cm * \i) (\s\i) {};
		}
	}
	
	\foreach \i in {0,...,5}{
		\foreach \s/\t in {FE/DE,DE/EX,EX/ME,ME/WB}{
			\draw[->,blue] (\s\i) -- (\t\i);
		}
	}
	\foreach \s in {FE,DE,EX,ME,WB}{
		\foreach \i/\j in {0/1,1/2,2/3,3/4,4/5}{
			\draw[dotted,->,red, thick] (\s\i) -- (\s\j);
		}
	}
	
	\foreach \i/\j in {0/2,1/3,2/4,3/5}{
		\foreach \s/\t in {DE/FE,EX/DE,ME/EX,WB/ME}{
			\draw[dotted,->,black, thick] (\s\i) -- (\t\j);
		}
	}
	
	\foreach \i/\j in {0/2,1/3,2/4,3/5}{
	    \foreach \s/\t in {FE/FE,DE/DE,EX/EX,ME/ME,WB/WB}{
			\draw[->,black,thick] (\s\i)edge[bend right] (\t\j);
		}
	}
	
	\draw[->,ForestGreen, thick] (ME2) -- (EX3);
	\draw[->,ForestGreen, thick] (EX0) -- (ME2);
	\draw[->,orange] (FE3)--(FE4);
}

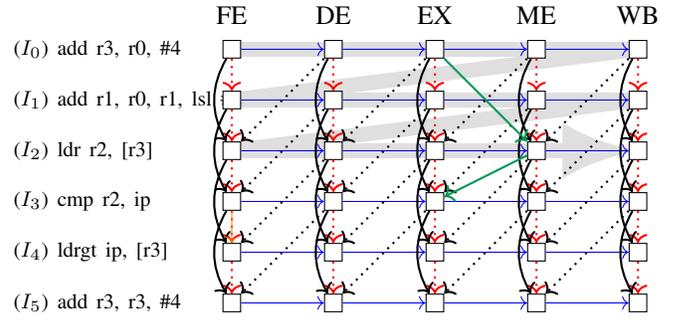
\begin{figure}[ht]
		\begin{tikzpicture}[scale=0.9]
		\draw[->, lightgray!50!white, line width=2mm, -{Latex}](0,-0.5)--(6,-0.5)--(-0,-1.25)--(6,-1.25)--(-0,-2)--(6,-2);
		\XGTikz
		\end{tikzpicture}
		\caption{\XG model of an instruction sequence.}
		\label{fig:basic-xg}
\end{figure}

Examples in this paper
consider a 5-stages (\textsc{FE} -- fetch, \textsc{DE} -- decode, \textsc{EX} -- execute, \textsc{ME} -- memory, \textsc{WB} -- write-back), in-order, 2-scalar pipeline but the presented algorithms are not limited to this configuration. Figure~\ref{fig:basic-xg} shows the \XG for this pipeline and for the sequence of instructions on the left. The~vertices correspond to the use of a stage (column headers) by an instruction (row headers) through the pipeline. The edges are generated according to the following dependencies:

\begin{itemize}
    \item The horizontal solid edges model the \emph{Pipeline Order}: an instruction goes through the pipeline in the order of stages.
    \item The vertical dotted edges model the parallel execution of instructions in the super-scalar stages (\emph{Program Order}).
    \item The vertical solid bent edges model the capacity limit of the stages -- 2 instructions per cycle (\emph{Capacity Order}).
    \item The slanted dotted edges model the capacity of FIFO queues (2 instructions) between stages (\emph{Queue Capacity}).
    \item The slanted solid edges model the \emph{Data Dependencies} between instructions when an instruction reads a register written by a prior instruction. 
\end{itemize}
The set of dependency edges shown above are typical for in-order pipelines. Depending on a particular pipeline design, rules to build the edges may be added or removed to account for specific features.

Using an \XG, the start time of an instruction in a stage $\rho_{\wxg}$ is computed as the earliest time at which all incoming dependencies are satisfied and the end time $\rho^*_{\wxg}$ as $\rho_{\wxg}$ increased by the time passed by the instruction in the stage:
\begin{align}
    \label{eq:xg-computation}
    \rho_{\wxg} & = ~\max_{\vxg \to \wxg \in E_{\XG}} \rho_{\vxg} + \delta_{\vxg \to \wxg} \times \lambda_{\vxg} \\    
    \rho^*_{\wxg} &= ~\rho_{\wxg} + \lambda_{\vxg}
\end{align}

The execution time of the instruction sequence is obtained by calculating the start time of each vertex following a~topological order in the \XG. Since the pipeline is in-order (all resources are allocated in program order), the instruction timing only depends on prior instructions meaning that, at least, one topological order exists. In in-order processors, this order is implied by the combination of the \emph{Pipeline Order}~(horizontal edges) and of the \emph{Capacity Order} (vertical edges). It is highlighted in the example \XG of Figure~\ref{fig:basic-xg} by the light gray arrow in the background.

The computation of time in \XG is fast and efficient but, as soon as the pipeline produces variable times (like Cache Hits/Misses), to be precise, the computation has to be performed for each combination of these times, causing a computation complexity blowup. Yet, the data structure presented in the next section alleviates this issue.

\subsection{Execution Decision Diagram}
\label{sec:xdd}
The eXecution Decision Diagram (\XDD) is inspired from the Binary Decision Diagram (BDD)~\cite{bdd-akers,bdd-bryant} and its Multi-Terminal BDD (MTBDD) variant~\cite{fujita1997multi}. An~\XDD is a \DAG that is recursively defined as:

\begin{definition}
\begin{equation}
	\XDD = \LEAF(k) ~|~ \NODE(e, \overline{\SF}, \SF)
\end{equation}
The Boolean variables $e \in \Events$ in the nodes are called \emph{events}: they model the uncertainty in the analysis regarding the micro-architecture state and its impact on the time e.g. whether a particular cache access results in a Hit or in a Miss. The~sub-trees $\SF, \overline{\SF} \in \XDD$ represent, respectively, the situations where the event $e$ happens or not. The leaves of {\XDD}s stores the execution times $k\in \Z^{\#}=\Z \cup \{ +\infty, -\infty\}$.
\end{definition}

As in OBDDs~\cite{bdd-bryant}, {\XDD}s deploy hash consing techniques to guarantee the unicity of the sub-trees instances and to speed up the calculations. Thus, identical sub-trees share the same instance in memory. This compression allows the {\XDD}s to represent efficiently the relationship between the combinations of events (called \emph{configurations}) and the corresponding execution times. A configuration $\gamma \in \Gamma$ is the combination of activation or inactivation of events -- $\Gamma = \wp(\Events)$, and corresponds to a path from the root node to a leaf in the \XDD \DAG.

When the events are taken into account, the actual time for each vertex is represented by a map between the event configurations and the corresponding times (i.e. in the domain $\Gamma \rightarrow \Z^{\#}$) which size is combinatorial with respect to the number of events and thus is costly in computation time.
~{\XDD}s efficiently solve this problem by factorizing identical sub-trees. In \cite{bai_improving_2020}, it has been shown that the subtree factorization frequently occurs in realistic benchmarks which largely speeds up the analysis. Yet, an isomorphism between \XDD and  $\Gamma \rightarrow \Z^{\#}$ exists: $\XDD  \xleftrightharpoons[\beta]{\alpha}  (\Gamma \rightarrow \Z^{\#})$. This means that \XDD can be deemed as a lossless compression of the map $(\Gamma \rightarrow \Z^{\#})$. In the remainder of the paper, we note $f[\gamma]$ (for $f \in \XDD$ or $f \in \Gamma \rightarrow \Z^{\#}$) the time corresponding to the configuration $\gamma$ in $f$. 

Figure~\ref{fig:example-xdd} shows an example of \XDD with 8 possible configurations: right edges of a node $e$ represent the activation of event $e$ and the left ones the non-activation. The~original map between configuration and the execution time is shown in the figure~\ref{fig:explicit-representation} on the right  ($IC$ and $DC$ represent, respectively, the instruction and data cache events, over-line bar denotes inactive events). 
\begin{figure}[h]
    \centering
    \begin{subfigure}{0.2 \textwidth}
    \tikzstyle{xdd node}=[shape=circle, draw, inner sep=.4]
    \begin{tikzpicture}[level distance=6mm,->,baseline]
            \begin{scope}[xshift=0mm, yshift=0mm]
            \def\off{2mm}
        	\node[xdd node] (DC2) { \tiny $DC_2$ };
        	\node[xdd node, below left=\off and 2mm of DC2] (IC1) { \tiny $IC_1$ };
        	\node[xdd node, below left=\off and 2mm of IC1] (IC0a) { \tiny $IC_0$ };
        	\node[right=4mm of IC0a] (L24) { $24$ };
        	\node[xdd node, right=12mm of IC0a] (IC0c) { \tiny $IC_0$ };
        	\node[below left=\off and 0mm of IC0a] (L6) { $7$ };
        	\node[below right=\off and 0mm of IC0a] (L15) { $16$ };
        	\node[below right=\off and \off of IC0c] (L25) { $25$ };
        	\draw[->,dashed] (DC2) edge (IC1);
        	\draw[->] (DC2) edge (IC0c);
        	\draw[->,dashed] (IC1) edge (IC0a);
        	\draw[->,] (IC1) edge (L24);
        	\draw[->,dashed] (IC0a) edge (L6);
        	\draw[->] (IC0a) edge (L15);
        	\draw[->] (IC0c) edge (L25);
        	\draw[->,dashed] (IC0c) edge[bend left=60,out=60] (L15);
        	\end{scope}
        \end{tikzpicture}
        \caption{Example XDD}
        \label{fig:example-xdd}
    \end{subfigure}
    \hfill
    \begin{subfigure}{0.2\textwidth}
    \footnotesize
    \begin{tabular}{|c|c|}
    \hline
        Configuration       & time \\
                \hline
        $IC_0, IC_1, DC_2$  & 25  \\
        $IC_0, \overline{IC_1}, DC_2$  & 25  \\
        $\overline{IC_0}, IC_1, DC_2$  & 16  \\
        $\overline{IC_0}, \overline{IC_1}, DC_2$  & 16  \\
        $IC_0, IC_1, \overline{DC_2}$  & 24  \\
        $\overline{IC_0}, IC_1, \overline{DC_2}$  & 24 \\
        $IC_0, \overline{IC_1}, \overline{DC_2}$  & 16 \\
        $\overline{IC_0}, \overline{IC_1}, \overline{DC_2}$  & 7 \\
    \hline
    \end{tabular}
    \caption{Explicit representation}
    \label{fig:explicit-representation}
    \end{subfigure}
    \begin{subfigure}{0.4\textwidth}
        \medskip
        \tikzstyle{xdd node}=[shape=circle, draw, inner sep=.4]
        \begin{tikzpicture}[level distance=6mm,->,baseline]
        	\def\off{2mm}
        	\begin{scope}[xshift=0mm, yshift=0mm]
                \node[xdd node](DC2){\tiny$DC_2$};
                \node[xdd node, below left=\off and 2mm of DC2] (IC1) {\tiny$IC_1$};
                \node[xdd node, below right=\off and 2mm of DC2] (IC0) {\tiny$IC_0$};
                
                \node[below left=\off and -1mm of IC1] (L3){ $3$};
                \node[below right=\off and -1mm of IC1] (L5){ $5$ };
                
                \node[below left=\off and -1mm of IC0] (L4){ $4$};
                \node[below right=\off and -1mm of IC0] (L6){ $6$ };
                
                \draw[->, dashed] (DC2) edge (IC1);
                \draw[->] (DC2) edge (IC0);
                \draw[->, dashed] (IC1) edge (L3);
                \draw[->] (IC1) edge (L5);
                \draw[->, dashed] (IC0) edge (L4);
                \draw[->] (IC0) edge (L6);
                
                \node[right=3mm of IC0](oplus){\Large $\oplus$  };
        	\end{scope}
        	
        	\begin{scope}[xshift=25mm, yshift=-3mm]
                \node[xdd node](DC2){\tiny$DC_2$};
                \node[below left=\off+2mm and -1mm of DC2] (L4){ $4$};
                \node[below right=\off+2mm and -1mm of DC2] (L7){ $7$ };
                \node[right=12mm of oplus](equals){\large $=$  };
                \draw[->, dashed] (DC2) edge (L4);
                \draw[->] (DC2) edge (L7);
        	\end{scope}
        	
        	\begin{scope}[xshift=48mm, yshift=-1mm]
                \node[xdd node](DC2){\tiny$DC_2$};
                \node[xdd node, below left=\off and 2mm of DC2] (IC1) {\tiny$IC_1$};
                
                \node[below left=\off and -1mm of IC1] (L4){ $4$};
                \node[below right=\off and -1mm of IC1] (L5){ $5$ };
                
                \node[below right=\off and 0mm of DC2] (L7){ $7$};

                \draw[->, dashed] (DC2) edge (IC1);
                \draw[->] (DC2) edge (L7);
                \draw[->, dashed] (IC1) edge (L4);
                \draw[->] (IC1) edge (L5);
        	\end{scope}
        \end{tikzpicture}
        \caption{Example of $\oplus$}
        \label{fig:example-oplus}
    \end{subfigure}
    \vspace{2mm}
    \caption{Example of {\XDD}s}
    \label{fig:xdd}
\end{figure}

\cite{bai_improving_2020} has shown that any binary operation $\odot$ on $\Z^{\#}$ used in $(\Gamma \rightarrow \Z^\#)$ can be transferred in the \XDD domain in an equivalent operation $\boxdot$ such that performing the operation in \XDD domain is lossless:


\begin{equation}
    \begin{split}
        & \forall s_1, s_2 \in (\Gamma \rightarrow \Z^\#)^2, \forall \gamma \in \Gamma,\\
        & s_1[\gamma] \odot s_2[\gamma] = (\alpha(s_1) \boxdot \alpha(s_2))[\gamma] \\
    \end{split}
\label{eq:binop-isomorphism}
\end{equation}
With $\alpha$ the morphism from $(\Gamma \to \mathbb{Z}^\#)$ to \XDD.


\medskip
The implementation of this operation is detailed in~\cite{bai_improving_2020}. Shortly, $\boxdot$ combines the \XDD operands along the sub-trees and applies $\odot$ when leaves need to be combined.
As the operations used in \XG analysis are $max$ and $+$ (Eq.~\ref{eq:xg-computation}), the equivalent operations on {\XDD}s are, respectively $\oplus$ and $\otimes$:

\begin{equation}
    \begin{split}
        & \forall s_1, s_2 \in (\Gamma \rightarrow \Z^\#)^2, \forall \gamma \in \Gamma,\\
        & s_1[\gamma] + s_2[\gamma] = (\alpha(s_1) \otimes \alpha(s_2))[\gamma] \\
        & max(s_1[\gamma], s_2[\gamma]) = (\alpha(s_1) \oplus \alpha(s_2))[\gamma] 
    \end{split}
\end{equation}

The work of $\oplus$ and its ability to reduce the size of {\XDD}s is illustrated in the example of Figure~\ref{fig:example-oplus}. 

The isomorphism
guarantees that using \XDD to perform the \XG analysis is \emph{precise} by following the dependency resolution rule (Equation~\ref{eq:xg-computation}) and following the proposed topological order. 
By exploiting the properties of {\XDD}s, the next section proposes a new paradigm of \BB time calculation to support the pipeline analysis at the \CFG level and of out-of-order accesses to the bus.
\section{Resource Based Model}
\label{sec:resource-based-model}
 

The usual approach -- consisting in building and solving an {\XG} for each {\BB} on its own, is no more sustainable when out-of-order bus accesses have to be supported. Indeed, the~bus access interactions can span over \BB bounds while {\XG}s for whole execution paths are inconvenient to build and the number of execution paths is intractable.


This section proposes to solve this issue by turning the original \XG model into a state machine model where the pipeline analysis is performed by applying transitions on the pipeline states. Moreover, by leveraging the algebraic property of \XDD, we improve the computational model by implementing the transitions as matrices multiplications. The matrices can be pre-computed before the pipeline analysis.

\subsection{Temporal State}

The dependencies $d\in\D$ in the \XG model the uses and releases of \emph{resources} e.g. stages, queues etc. For instance, in our 5-stage in-order pipeline, determining the start time of an instruction $I_i \in \Insts$ in stage $DE$ requires:
\begin{itemize}
    \item the start time of the previous instruction in the DE stage: $\rho_{[I_{i-1}/DE]}$ (Program Order),
    \item the end time of the second last instruction in the DE stage: $\rho^*_{[I_{i-2}/DE]}$ (Capacity Order),
    \item the end time of $I_i$ in the FE stage: $\rho^*_{[I_i/FE]}$ (Pipeline Order),
    \item the start time of the second last instruction in the EX stage: $ \rho_{[I_{i-2}/EX]}$ (Queue Capacity).
\end{itemize}
Where $I_{i-n}$ represents the $n^{th}$ previous instruction. $\rho_{[I_i/s]}$ and $\rho^*_{[I_i/s]}$ respectively stand for the start and the end time of the $[I_i/s]$ vertex.  
The actual start time of $I_i$ in $DE$ is the earliest date at which all dependencies are satisfied:
\begin{align*}
    \rho_{[I_i/DE]}         & = \rho_{[I_{i-1}/DE]} \oplus \rho^*_{[I_{i-2}/DE]} \\ 
                            & \oplus  \rho^*_{[I_i/FE]} \oplus \rho_{[I_{i-2}/EX]} \\
    \rho^*_{[I_{i-2}/DE]}   & = \rho_{[I_{i-2}/DE]} \otimes \lambda_{[I_{i-2}/DE]} \\
    \rho^*_{[I_{i}/FE]}   & = \rho_{[I_{i}/FE]} \otimes \lambda_{[I_{i}/FE]} \\
\end{align*}
This corresponds to the computation of Equation~\ref{eq:xg-computation} extended to the \XDD domain: we use $\oplus$ instead of $max$ and $\otimes$ instead of $+$.
As in the integer case, each computation requires the results of the computations of previous instructions (e.g. $\rho_{[I_{i}/DE]}$ requires $\rho_{[I_{i-1}/DE]}$ and $\rho^*_{[I_{i-2}/DE]}$) which correspond to the release time of the concerned resources. 



Table~\ref{tab:vec-de} sums up the dependency information required to compute the start time of any instruction $I_i$ in the $DE$ stage. The necessary information may differ depending on the pipeline architecture, but an important point is that any architecture that can be described in the \XG model can also be expressed as a vector of {\XDD}s as Table~\ref{tab:vec-de}.
\begin{table*}[htbp]
    \centering
    \begin{tabular}{|c|c|c|c|c|c|}
    \hline
         Program Order & \multicolumn{2}{c|}{Capacity Order} & \multicolumn{1}{c|}{Pipeline Order} & \multicolumn{2}{c|}{Queue Capacity}  \\
         \hline
         $\rho_{[I_{i-1}/DE]}$ &  $\rho^*_{[I_{i-2}/DE]}$ & $\rho^*_{[I_{i-1}/DE]}$ & $\rho^*_{[I_i/FE]}$ & $\rho_{[I_{i-2}/EX]}$ & $\rho_{[I_{i-1}/EX]}$\\
    \hline
    \end{tabular} 
    \caption{Necessary information determining the start time of any $I_i$ in the DE stage.}
    \label{tab:vec-de}
\end{table*}

In the same fashion, such a vector can be built for each instruction and for each stage of the pipeline. Table~\ref{tab:vec-all} shows the complete dependency information to be maintained for all stages of the example pipeline: each line represents a stage and each column represents a dependency on a resource to be satisfied to start the stage execution. The symbol $-\infty$ is used when no dependency is required\footnote{ $-\infty$ is convenient as it is neutral for the $max$ operation.}. $I_{fetch}$, $I_{load}$, $I_{store}$ and $I_{R_i}$ are, respectively, the last instructions that fetched an instruction block from memory, performed a load, a store and wrote to register $R_i$ (in stage $s_{Ri}$).

\begin{table*}[htbp]
\resizebox{\textwidth}{!}{
    \centering
    \begin{tabular}{|c|c|c|c|c|c|c|}
        \hline
    &Prog. Order            &\multicolumn{2}{c|}{Capacity Order}               & Pipeline Order            &\multicolumn{2}{c|}{Queue Capacity} \\
        \hline    
FE  &$\rho_{[I_{i-1}/FE]}$ &$\rho^*_{[I_{i-1}/FE]}$&$\rho^*_{[I_{i-2}/FE]}$  &$-\infty$          &$\rho_{[I_{i-1}/DE]}$&$\rho_{[I_{i-2}/DE]}$\\
DE  &$\rho_{[I_{i-1}/DE]}$ &$\rho^*_{[I_{i-1}/DE]}$&$\rho^*_{[I_{i-2}/DE]}$  &$\rho^*_{[I_{i}/FE]}$ &$\rho_{[I_{i-1}/EX]}$&$\rho_{[I_{i-2}/EX]}$\\
EX  &$\rho_{[I_{i-1}/EX]}$ &$\rho^*_{[I_{i-1}/EX]}$&$\rho^*_{[I_{i-2}/EX]}$ &$\rho^*_{[I_{i}/DE]}$  &$\rho_{[I_{i-1}/ME]}$&$\rho_{[I_{i-2}/ME]}$\\
ME  &$\rho_{[I_{i-1}/CM]}$ &$\rho^*_{[I_{i-1}/ME]}$&$\rho^*_{[I_{i-2}/ME]}$ &$\rho^*_{[I_{i}/EX]}$  &$\rho_{[I_{i-1}/CM]}$&$\rho_{[I_{i-2}/CM]}$\\
CM  &$\rho_{[I_{i-1}/ME]}$ &$\rho^*_{[I_{i-1}/CM]}$&$\rho^*_{[I_{i-2}/CM]}$ &$\rho^*_{[I_{i}/ME]}$  &$-\infty$         &$-\infty$ \\
        \hline 
    & Fetch Order             &\multicolumn{2}{c|}{Memory Order}           &\multicolumn{3}{c|}{Data Dependencies} \\
        \hline
FE  &$\rho^*_{[I_{fetch}/FE]}$&\multicolumn{2}{c|}{$-\infty$}        &\multicolumn{3}{c|}{$-\infty$} \\
DE  &$-\infty$          &\multicolumn{2}{c|}{$-\infty$}        &\multicolumn{3}{c|}{$-\infty$} \\
EX  &$-\infty$          &\multicolumn{2}{c|}{$-\infty$}        &$\rho^*_{[I_{R0}/s_{R0}]}$&$\rho^*_{[I_{R0}/s_{R1}]}$&$...$ \\
ME  &$-\infty$          &$\rho^*_{[I_{load}/ME}$&$\rho^*_{[I_{store}/ME]}$ &$\rho^*_{[I_{R0}/s_{R0}]}$&$\rho^*_{[I_{R1}/s_{R1}]}$&$...$ \\
CM  &$-\infty$          &\multicolumn{2}{c|}{$-\infty$}        &\multicolumn{3}{c|}{$-\infty$} \\
        \hline
    \end{tabular}
    }
    \caption{The \emph{temporal state}.}
    \label{tab:vec-all}
\end{table*}

Finally, Table~\ref{tab:vec-all} sums up the set of dependencies an instruction has to satisfy considering all possible pipeline stage, i.e. $\D$. Grouped in an \emph{\XDD vector}, defined as $\PipeStates = \XDD^{|\D|}$, they precisely represent the \emph{temporal state} of the pipeline. For a given stage $s$, a \emph{temporal state} $\vec{S} \in \PipeStates$, and $\D_{[I_i/s]} \subset \D$ the set of dependencies applicable to \XG vertex $[I_i/s]$, start and end times can now be rewritten as:

\vspace{-6mm}
\begin{align}
    \rho_{[I_i/s]} & = \bigoplus_{d \in \D_{[I_i/s]}} \vec{S}[i_d] \label{eq:max-required-resource}\\
    \rho^{*}_{[I_i/s]} = &\rho_{[I_i/s]} \otimes \lambda_{[I_i/s]} \label{eq:consume-time}
\end{align}

Notice that the function $\delta_{v \to w}$ is useful as its effect is supported by the dependency on start or end time in $\vec{S}$.


\subsection{Pipeline Analysis with Temporal States}
\label{sec:pipe-analysis-with-state-vec}
We now present how \emph{temporal states} are updated during the analysis to account for the execution of instructions in the pipeline.
To simplify the computations, we add a slot $\rho$ at index $i_\rho$ into the state vector that records the \emph{current time} all along the analysis, which we call the \emph{time pointer}.

\begin{definition}
Following the principle of \XG analysis, the~behavior of an instruction in a stage can be divided into four steps.
\begin{itemize}
    \item Step 1. Before being executed in the stage, the instruction waits until all dependencies are satisfied (Eq.~\ref{eq:max-required-resource}). To~model this behavior with the \emph{temporal state}, the~time pointer is reset to $\mathbb{0} = LEAF(-\infty)$~(Step 1.1). Then, each dependency time is accumulated with $\oplus$ into the time pointer~(Step~1.2). At the end of Step~1, the time pointer records the maximum release time of all dependencies which is the actual start time for the analyzed \XG vertex. The transitions for the \emph{temporal state} are defined with the functions $\tau_{reset}$ and $\tau_{wait}$:
    
    \begin{equation}
    \begin{split}
        &\tau_{reset}: \PipeStates \rightarrow \PipeStates, \\
        &\tau_{reset}(\vec{S}) = \vec{S}'~|~
        \begin{cases}
        \vec{S}'[i] = \vec{S}[i] \otimes \Zero \text{ if } i = i_{\rho} \\ 
        \vec{S}'[i] = \vec{S}[i] \text{ otherwise }
        \end{cases}
    \end{split}
    \end{equation}
    \begin{equation}
    \begin{split}
        &\tau_{wait}: \N \times \PipeStates \rightarrow \PipeStates, \\
        &\tau_{wait}(x,\vec{S}) = \vec{S}'~|~
        \begin{cases}
        \vec{S}'[i] = \vec{S}[i] \oplus \vec{S}[x]  \text{ if } i = i_{\rho} \\ 
        \vec{S}'[i] = \vec{S}[i] \text{ otherwise }
        \end{cases}
    \end{split}
    \end{equation}

    $\tau_{wait}$ has to be called for each dependency (with index $x$ in the \emph{temporal state} vector) of the current vertex.
    
    
    \item Step 2. Some resources are released at the start of an \XG vertex. The corresponding dependencies (e.g. Program Order and Queue Capacity) have to be updated with the start time $\rho$. This is done with $\tau_{move}$:
    \begin{equation}
    \begin{split}
        &\tau_{move}: \N \times \N \times \PipeStates \rightarrow \PipeStates, \\
        &\tau_{move}(i_{dest}, i_{src} ,\vec{S}) = \vec{S}'~|~
        \begin{cases}
        \vec{S}'[i] = \vec{S}[i_{src}]\text{ if } i = i_{dest} \\ 
        \vec{S}'[i] = \vec{S}[i] \text{ otherwise }
        \end{cases}
    \end{split}
    \end{equation}    
    $\tau_{move}$ copies a vector element into another element and overwrite the destination value. Updating the dependency of single resource turns out to copy $\rho$ into the slot of the dependency: for example updating the start time of a stage $s \in Stages$ with index $i_s$ consists in  $\tau_{move}(i_{s}, i_{\rho}, \vec{S})$. The state of FIFO resources (like queues) requires to update several \emph{temporal state} slots ($i$ to $i+n-1$ with $n$ the FIFO capacity) to express the shift of the $n$ last FIFO uses. Hence FIFO resources are updated by a series of $\tau_{move}$ on the $n$ FIFO slots in the \emph{temporal state} and by setting the first slot to $\rho$, the use time for the first FIFO element:
    \begin{align*}
        &\forall j \in [i, i+n-2], \tau_{move}(j+1, j, \vec{S}); \\ 
        &\tau_{move}(i, i_{\rho}, \vec{S});
    \end{align*}

    \item Step 3. The started instruction spends $\lambda_{[I_i/s]}$ cycles in the stage. 
    \begin{equation}
    \begin{split}
        &\tau_{consume}: \N \times \PipeStates \rightarrow \PipeStates, \\
        &\tau_{consume}(\lambda_{[I_i/s]} ,\vec{S}) = \vec{S}'|
        \begin{cases}
        \vec{S}'[i] = \vec{S}[i] \otimes \lambda_{[I_i/s]} \text{ if } i = i_{\rho} \\ 
        \vec{S}'[i] = \vec{S}[i] \text{ otherwise }
        \end{cases}
    \end{split}
    \end{equation}
    
    \item Step 4. The instruction finishes its execution and the dependencies recording the end time of the current vertex are updated. The $\tau_{move}$~ operation  of Step~2  is used.
\end{itemize}
\label{def:steps-at-stage}
\end{definition}
As in the original \XG resolution model,
the computational model with \emph{temporal states} has to follow the topological order so that the times recorded in the {\XDD} vector refer to the~correct timing of resources. In other words, if the state is correctly updated according to the rules stated above, the~resource-based model is equivalent to the original \XG analysis but expressed in state machine fashion. The implementation using {\XDD}s extends the model to consider all possible cases according to the timing variations without any loss. The \BB analysis is consequently \emph{exact} with respect to the \XG pipeline model.
\subsection{The computational model}
An important property of \XDD domain is that, equipped with $\oplus$ and $\otimes$, it forms the semiring $\langle \XDD, \oplus, \otimes, \Zero, \One \rangle$ with $\Zero = LEAF(-\infty)$ and $\One = LEAF(0)$.  As the functions $\tau$ are affine in this domain, their application can be expressed as matrix multiplications. By combining and pre-computing these matrices, they will help to speed up the pipeline analysis at \CFG level as some {BB}s need to be recomputed several times in different execution contexts.

Scalar and matrix multiplication on \XDD semiring is similar to the linear algebra over $\mathbb{R}$ by replacing $+$ by $\oplus$, $\times$ by~$\otimes$:

\begin{definition}
The scalar multiplication is defined by:
\begin{multline*}
    \cdot: \XDD^{N} \times \XDD^{N} \rightarrow \XDD, \\
    [f_0, f_1, ..., f_{N-1}] \cdot [f_0', f_1', ..., f_{N-1}'] = \bigoplus_{0 \le i \le N-1} f_i \otimes f_i'\\
\end{multline*}
\end{definition}
\vspace{-5mm}
\begin{definition}
The matrix multiplication is defined by:
\begin{multline*}
    \cdot: \XDD^{N \times M} \times \XDD^{M \times L} \rightarrow \XDD^{N \times L}, \\
    B \cdot C = 
    \begin{bmatrix}
    &  & \\
    & ~ A_{i,j}& \\
    & &
    \end{bmatrix} | A_{i,j} = \bigoplus_{1 \le k \le M}B_{i,k} \otimes C_{k,j} \\
\end{multline*}
\end{definition}
\vspace{-5mm}
\begin{definition}
The identity matrix $Id$ on $\XDD$ semiring is defined by:
\begin{align*}
    Id &=
    \begin{bmatrix}
        &&\\&A_{i,j}&\\ &&
    \end{bmatrix}| A_{i,j} = \begin{cases} 
                                \One \text{ if  } i = j \\
                                \Zero \text{ otherwise}
                            \end{cases} \\
\end{align*}

\label{def:state-transitions}
\end{definition}
\vspace{-5mm}
Notice that, by definition, $\vec{S} \cdot Id = \vec{S}$: any matrix column at index $i$ composed of $\Zero$ except with a $\One$ in the row $i$ maintains unchanged the value of $\vec{S}[i]$ in the resulting vector. 
To implement the transitions functions $\tau$ as matrix multiplications, the~matrix $Id$ is taken as a basis and only the cells that have an effect on the vector have to be changed

\begin{enumerate}

    \item A $\Zero$ on the diagonal of the $Id$ matrix at the timer pointer position resets it: $\vec{S}[i_{\rho}] \otimes \Zero = \Zero$:
        \begin{equation}
        \begin{split}
            &\tau_{reset}(\vec{S}) = \vec{S} \cdot M_{reset} \\
            &= \vec{S}\cdot 
                            \begin{bmatrix}
                                &&\\&A_{i,j}&\\ &&
                            \end{bmatrix} | A_{i,j} =\begin{cases}
                                                        \Zero \text{ if } i = j = i_{\rho} \\
                                                        Id_{i,j} \text{ otherwise }
                                                    \end{cases}
        \end{split}
        \end{equation}

    \item For a given slot at index $x$ in $\vec{S}$, $\tau_{wait}(x, \vec{S})$ is represented by a matrix $M_{wait(x)}$ with a $\One$ at position $(x, i_{\rho})$ resulting in the operation $\rho \oplus (\One \otimes \vec{S}[x])$:
        \begin{equation}
        \begin{split}
            &\tau_{wait}(x, \vec{S}) = \vec{S} \cdot M_{wait(x)} \\
            &=\vec{S} \cdot
                            \begin{bmatrix}
                                &&\\&A_{ij}&\\ &&
                            \end{bmatrix} | A_{i j} =\begin{cases}
                                                        \One \text{ if } i = i_{\rho} \wedge j = x \\
                                                        Id_{ij} \text{ otherwise }
                                                    \end{cases}
        \end{split}
        \end{equation}

    \item $\tau_{move}(i_{src}, i_{dest}, \vec{S})$ is represented by a matrix $M_{move(i_{src}, i_{dest})}$ where the element at $(i_{dest}, i_{dest})$ is set to $\Zero$ and the element $(i_{dest}, i_{src})$ to $\One$ s.t. element $i_{dest}$ in the result becomes $(\Zero \otimes \vec{S}[i_{dest}]) \oplus (\One \otimes \vec{S}[i_{src}]) = \vec{S}[i_{src}]$.
    
        \begin{equation}
        \begin{split}
            &\tau_{move}(i_{src}, i_{dest}, \vec{S}) = \vec{S} \cdot M_{move(i_{src}, i_{dest})} \\
            &= \vec{S}\cdot 
                            \begin{bmatrix}
                                &&\\&A_{ij}&\\ &&
                            \end{bmatrix} | A_{ij} =\begin{cases}
                                                        \Zero \text{ if } i = j = i_{dest} \\
                                                        \One \text{ if } i = i_{src} \wedge j = i_{dest} \\
                                                        Id_{i,j} \text{ otherwise }
                                                    \end{cases}
        \end{split}
        \end{equation}

    \item For a given latency $\lambda$, $\tau_{consume}(\lambda, \vec{S})$ can be represented by a matrix $M_{consume(\lambda)}$, obtained from $Id$ by putting $\lambda$ at position $(i_{\rho}, i_{\rho})$.
        \begin{equation}
        \begin{split}
            &\tau_{consume}(\lambda, \vec{S}) = \vec{S} \cdot M_{consume}^{\lambda} \\
            &= \vec{S}\cdot 
                            \begin{bmatrix}
                                &&\\&A_{ij}&\\ &&
                            \end{bmatrix} | A_{ij} =\begin{cases}
                                                        \lambda \text{ if } i = j = i_{\rho} \\
                                                        Id_{i,j} \text{ otherwise }
                                                    \end{cases}
        \end{split}
        \end{equation}

\end{enumerate}

\begin{theorem}
Each applied transition  function $\tau$ to the timing vector is a linear map from $\PipeStates$ to $\PipeStates$
\end{theorem}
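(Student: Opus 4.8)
The plan is to reduce the statement to a single algebraic fact: over the semiring $\langle \XDD, \oplus, \otimes, \Zero, \One \rangle$, right-multiplication of a vector by a fixed square matrix is a linear map, and then to observe that each $\tau$ has already been exhibited in exactly that form. First I would fix the vocabulary: $\PipeStates = \XDD^{|\D|}$ (together with the extra time-pointer slot) is a semimodule over $\langle \XDD, \oplus, \otimes, \Zero, \One \rangle$ under componentwise $\oplus$ and componentwise scalar product $\otimes$, with the all-$\Zero$ vector as its zero element. Here "linear" is meant in the semimodule sense: a map $\tau\colon \PipeStates \to \PipeStates$ is linear iff $\tau(\vec{S_1} \oplus \vec{S_2}) = \tau(\vec{S_1}) \oplus \tau(\vec{S_2})$ and $\tau(c \otimes \vec{S}) = c \otimes \tau(\vec{S})$ for every scalar $c \in \XDD$, all read without appeal to additive inverses, which this semiring does not possess.

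Next I would prove the algebraic fact coordinatewise. Fix a matrix $M \in \XDD^{N\times N}$ and set $\Phi_M(\vec{S}) = \vec{S}\cdot M$, so that for each output coordinate $j$ we have $\Phi_M(\vec{S})[j] = \bigoplus_{k} \vec{S}[k] \otimes M_{k,j}$ by Definition~\ref{def:state-transitions}. Additivity then follows from distributivity of $\otimes$ over $\oplus$ together with commutativity and associativity of $\oplus$, which let one regroup $\bigoplus_k (\vec{S_1}[k] \oplus \vec{S_2}[k]) \otimes M_{k,j}$ into $\bigl(\bigoplus_k \vec{S_1}[k]\otimes M_{k,j}\bigr) \oplus \bigl(\bigoplus_k \vec{S_2}[k]\otimes M_{k,j}\bigr)$. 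Homogeneity follows from associativity of $\otimes$ and the same distributivity, pulling the common scalar $c$ out of the sum: $\bigoplus_k (c \otimes \vec{S}[k]) \otimes M_{k,j} = c \otimes \bigoplus_k \vec{S}[k]\otimes M_{k,j}$. I would also note that $\Phi_M$ automatically maps the zero vector to the zero vector, since $\Zero$ annihilates $\otimes$ and hence $\bigoplus_k \Zero \otimes M_{k,j} = \Zero$. The care needed here is purely bookkeeping: making explicit which semiring axiom is used at each step, and checking that the row-vector-times-matrix orientation agrees with the definitions so that composing transitions corresponds to multiplying their matrices on the right in order.

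Finally I would close the loop by pointing to the four derivations immediately preceding the theorem, which present $\tau_{reset}$, $\tau_{wait}(x,\cdot)$, $\tau_{move}(i_{src},i_{dest},\cdot)$ and $\tau_{consume}(\lambda,\cdot)$ as $\Phi_{M_{reset}}$, $\Phi_{M_{wait(x)}}$, $\Phi_{M_{move(i_{src},i_{dest})}}$ and $\Phi_{M_{consume}^{\lambda}}$ respectively — each an instance of $\Phi_M$ for a concrete $M$ built from the identity matrix — so each is linear by the fact just established. Since the composition of linear maps is linear (which itself reduces to associativity of matrix multiplication over the semiring, $(\vec{S}\cdot M_1)\cdot M_2 = \vec{S}\cdot (M_1 M_2)$), any finite sequence of transitions applied during the four-step treatment of an \XG vertex in Definition~\ref{def:steps-at-stage} is again linear, which is exactly what is claimed. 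I do not expect a genuine obstacle: the only pitfall is slipping into ring-style reasoning — e.g. deducing $\tau(\Zero)=\Zero$ from additivity alone by cancellation — which is why I would obtain that property from the explicit matrix form instead.
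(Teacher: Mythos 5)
Your proposal is correct and follows the same route as the paper, whose proof is simply the remark that each $\tau$ has already been given a matrix representation over the semiring $\langle \XDD, \oplus, \otimes, \Zero, \One \rangle$. You merely fill in the routine semiring-axiom verification (additivity, homogeneity, and closure under composition via associativity of matrix multiplication) that the paper leaves implicit with its one-line ``direct'' argument.
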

\begin{proof}
Direct since we have already given the matrix representation of each transition in Definition~\ref{def:state-transitions}.
\end{proof}





Consequently, the operation performed at each step is also linear because they are combination of $\tau$ functions. Their~matrix representation is simply the multiplication of each invoked~$\tau$. For example,
\begin{align*}
    M_{Step_1[I_i/s]}  &= M_{reset} \cdot \prod_{d \in \D_{[I_i/s]}} M_{wait(i_d)}
\end{align*}
With $\D_{[I_i/s]}$ the set of dependencies required by $[I_i/s]$ and $i_d$ the index of resource $d$ in the state vector.

Similarly, we can express $M_{Step_2[I_i/s]}, M_{Step_3[I_i/s]}$ and $M_{Step_4[I_i/s]}$ by invoking the corresponding $\tau$ functions.
As~each step is linear, the operation when analyzing one instruction on a stage is also linear because it is the combination of the 4 steps.
\begin{equation*}
    M_{[I_i/s]} = M_{Step_1[I_i/s]} \cdot M_{Step_2[I_i/s]} \cdot M_{Step_3[I_i/s]} \cdot M_{Step_4[I_i/s]}
\end{equation*}

Finally, the whole analysis of a BB $a \in V$ is composed by the analysis of each instruction on each stage:
\begin{equation*}
    M_{a} = \prod_{I_i \in a} \prod_{s \in \Stages} M_{[I_i/s]}
\end{equation*}

With a matrix as $M_a$, it is easy and fast to compute the output \emph{temporal state} $\vec{S}' \in \PipeStates$ corresponding to an input \emph{temporal state} $\vec{S} \in \PipeStates$ for a \BB $a$:
\begin{equation}
    \vec{S}' = \vec{S} \cdot M_a
\end{equation}

\section{Pipeline Analysis on the CFG}
\label{sec:analysis-on-cfg}
This section extends the \emph{temporal state} computational model, presented in the previous section, to the complete analysis of the \CFG. It consists, mainly, in tracking the explicit set of possible \emph{temporal states} for each \BB all over the \CFG execution paths.

\subsection{Computing the context with Rebasing operation} 
So far, the \emph{temporal state} contains times relative to the start of a \BB. As the analysis on \CFG starts from the entry point of the program, the recorded times are execution times relative to the start of the program and the \emph{temporal states} have to be tracked for all possible execution paths. This is generally infeasible because of the number of execution paths and especially because of the presence loops. In fact, the main reason to compute exact \emph{temporal states} at \CFG level is to determine bus accesses timings but these timings does not need to be absolute with respect to the start of the program. Instead, the times can be relative to different time bases arbitrarily chosen, while, to preserve the soudness of the computation,  {\XDD}s with different bases are not mixed. We call this operation \emph{rebasing}.


 \emph{Rebasing} a state is changing the origin of the timeline of the times it contains. For now, the \emph{temporal state} at the end of a \BB $a$ represents the delay induced by the execution of $a$ to the start of following {\BB} $b$. Considering that a new time base $T \in \XDD$ is the start of $b$, we can get a new \emph{temporal state} relative to~$T$ by subtracting $T$ from the times in the \emph{temporal state} in the base of $a$. The outcome is a \emph{temporal state} containing {\XDD}s with positive or negative times relative to~$T$. The relationship between times and events in the \emph{temporal state} is preserved. The subtraction in {\XDD}s $\oslash$ is built in the usual way from $-$ operator~(Eq.~\ref{eq:binop-isomorphism}).

\emph{Rebasing} a \emph{temporal state} is lossless simply because $\oslash$ is reversible. By adding $T$ (with $\oplus$), one can find back the state before rebasing.
Rebasing is very helpful to reduce the size of {\XDD}s in the \emph{temporal state}: an event removed by rebasing has no effect on the following {\BB}s but it does not mean it has no effect. In fact, its contribution to the overall WCET is simply linear with respect to the number of occurrences of the {\BB}.
Intuitively, the execution of an instruction
depends on the execution of nearby instructions and thus, the effect of events is rather short term and it is often eliminated by rebasing.




\subsection{Events Generation within loops}
The events calculated by global analyses are linked to a particular instruction.
The pipeline analysis of a \BB presented so far deems the occurrence of events unique. This is not true when an event arises in a \BB contained in a loop as it may occur or not in different iterations. We would get unsound timings if we denotes these different event occurrence with the same event node in the \XDD. To fix this, a \emph{generation number} is associated with each event. To prevent \emph{temporal state} blowup, this \emph{generation number} is relative to the current iteration and is incremented in the current \emph{temporal state} each time the analysis restarts the loop. The \emph{generation number} thus distinguishes the events in different iterations. However, this method does not result in an endless increase of generation because (a) the effect of events is often bound in the time and (b) the WCET calculation requires to bound the loop iterations.

\subsection{The CFG pipeline analysis}
Finally, the complete pipeline analysis is designed like a~classical data-flow analysis with a work list. Each \BB is associated with a set of input \emph{temporal states} and a set of output \emph{temporal states} (initially empty). The analysis starts with an initial \emph{temporal state} at the entry of the \CFG and propagates the new states all along the \CFG paths.  For each entry edge of a \BB, the input state set is the union of the output states of the preceding {\BB}s. Each input state is updated by multiplying it with the pre-computed matrix and is rebased to make a new output state. If the set of the output states differs from the original set, the successors of the current \BB are pushed into the work list. The process is repeated until finding a fix-point on all sets of input/output states.

This process may be subject to state explosion blow-up caused either by the control flow or by the timing variations i.e. the events. Using \XDD, the variability caused by events is efficiently recorded without any loss thanks to its \emph{compaction} property. Besides, the analysis at CFG level collects the set of all possible pipeline states
meaning it is also lossless according to the the variability caused by the control-flow. In turn, this means that the resulting set of vectors of {\XDD}s contains sufficient information to determine the exact temporal behavior of each \BB in all possible situations.


\section{Modeling the Shared Memory Bus}
\label{sec:shared-bus}

A frequent design in embedded microprocessors is to have the instruction and data caches sharing a common bus to the memory (or to a shared L2 cache).  
So far, our pipeline analysis required the target processor to be in-order to ensure a correct evaluation order but a~shared memory bus introduces an out-of-order behavior that raises a new difficulty: the variability created by events in the start times of FE and ME stages may change the access order to the shared bus. 
As \XG dependencies are not expressive enough to model out-of-order bus allocations,
this section proposes an extension to the pipeline analysis to manage efficiently the shared bus accesses according to the different configurations of the \emph{temporal states}. It supports the usual bus arbitration policy: first-come-first-served, with the priority given to the ME stage in case of synchronous bus accesses.



\subsection{Bus scheduling topology} 
\label{sec:bus-scheduling-topology}

Since we consider an in-order pipeline, the number of possible contention scenarios on the shared bus is limited. For instance, an instruction using the bus in the ME stage cannot contend with any subsequent instructions in the ME stage (load/store memory order is preserved). In the same fashion, the bus accesses by FE stage are performed following the \emph{Program Order}. Moreover, the \emph{Pipeline Order} ensures that a request emitted by an instruction in the FE stage acquires the bus before a request emitted by the same instruction in the ME stage. This means that the bus allocation in an in-order pipeline is almost completely in-order, with only one exception: the bus usage in the ME stage by an instruction denoted $ME_{0}$ may be delayed by a bus request in the FE stage by a subsequent instruction denoted $FE_{i|i>0}$. To simplify the notation in this section, $ME_{0}$ and $FE_{i|i>0}$ denotes as well the instructions as the \XG vertices in their respective stage. The instructions in-between are disregarded but are still accounted for in the update matrices for the \emph{temporal states}.

To sum up, $FE_i$ can delay $ME_0$ only if $FE_{i}$ is ready to enter FE stage before $ME_{0}$ is ready. In the \XG model, this situation can only happen when $FE_i$ does not depend on $ME_0$, that is, when there is no path from $ME_0$ to $FE_i$ 
\footnote{The occurrence of such situations is limited by the size of the inter-stage queues in the pipeline.}.

In the example of Table~\ref{tab:scheduling-me0}, we consider that $ME_0$ can only be delayed by $FE_1$, $FE_2$ and $FE_3$.
For a particular configuration of events, there are four possible schedules that are shown in the first column of the table. These four schedules correspond to the four possible ways to interleave $ME_0$ with $FE_i$ accesses. The actual schedule is determined by comparing the ready time of $ME_0$ ($\rho_{ME_0}$) with the ones of $FE_1$, $FE_2$ and $FE_3$ (resp. $\rho_{FE_1}$, $\rho_{FE_2}$ and $\rho_{FE_3}$): the center column shows the condition corresponding to each schedule. 
The third column gives the actual time at which $ME_0$ gets the bus with $\lambda_{BUS}$ denoting the latency to access to the bus (including the memory transaction): if $ME_0$ is the first to be ready, then it gets the bus
at time $\rho_{ME_0}$. Otherwise, $ME_0$ gets the bus at the maximum time between its ready time and the release time of the $FE_i$ contender that get the bus before.

\begin{table*}[htbp]
\centering
\begin{tabular}{|c|c|c|}
    \hline
    Schedule         & Condition                                   & Scheduling time of $ME_0$   \\
    \hline
    $ME_0, FE1, FE2, FE3$ & $\rho_{ME_0} \leq \rho_{FE_1}$                   & $\rho_{ME_0}$\\
    \hline
    $FE1, ME_0, FE2, FE3$ & $ \rho_{FE_1} < \rho_{ME_0} \leq \rho_{FE_2}$ & $max(\rho_{FE_1} + \lambda_{BUS}, \rho_{ME_0})$\\
    \hline
    $FE1, FE2, ME_0, FE3$ & $ \rho_{FE_2} < \rho_{ME_0} \leq \rho_{FE_3}$ & $max(\rho_{FE_2} + \lambda_{BUS}, \rho_{ME_0})$\\
    \hline
    $FE1, FE2, FE3, ME_0$ & $ \rho_{FE_3} < \rho_{ME_0}$                 & $max(\rho_{FE_3} + \lambda_{BUS}, \rho_{ME_0})$\\
    \hline
\end{tabular}
    \caption{Possible schedules of $ME_0$ with subsequent $FE$s.}
    \label{tab:scheduling-me0}
\end{table*}

\tikzstyle{xdd node}=[shape=circle, draw, inner sep=.4]
\begin{figure*}[htbp]
    \centering
    \tikzstyle{xdd node}=[shape=circle, draw, inner sep=.4]
    \begin{tikzpicture}[level distance=6mm,->,baseline]
            \def\off{2mm}
            \def\botline{0mm}
            \def\midline{25mm}
            \def\topline{48mm}
            \def\upline{66mm}
            \def\vshift{30mm}
            
            \node at (0*\vshift, \upline+10mm)(label_rho){$\rho_{S_i}$};
            \node[align=center] at (0.78*\vshift, \upline+13mm)(scheduleME){
                $\rho_{schedME_0} =$ \\
                $(\rho_{ME_0} \blacktriangleleft_{ME} \rho_{FE_i})$ \\
                $\oplus \rho_{rel}$
            };
            \node[align=center] at (1.83*\vshift, \upline+12mm)(rhohatME){
                $\rhohat_{ME_0} =$ \\
                $\rhohat_{ME_0} \ominus \rho_{schedME_0}$
            };
            \node[align=center] at (2.85*\vshift, \upline+12mm)(scheduleFE) {
                $\rho_{schedFE_i} =$ \\
                $\rho_{FE_i} \blacktriangleleft_{FE} \rho_{ME_0}$
            };
            \node[align=center] at (3.7*\vshift, \upline+12mm)(rhorel) {
                $\rho_{rel}$\\
            };
            \node[align=center] at (4.6*\vshift, \upline+12mm)(rhohatFE) {
                $\rhohat_{FE_i}$\\
            };
            
            \node at (-20mm, \upline-3mm)(x1){$ME_0$};
            \node at (-20mm, \topline-4mm)(x2){$FE_1$};
            \node at (-20mm, \midline-5mm)(x3){$FE_2$};

            \begin{scope}[xshift=0mm, yshift=\upline]
            \node[xdd node](e0) {\tiny $e_0$};
            \node[xdd node, below left=\off and 4mm of e0](e1) {\tiny $e_1$};
        	\node[below left=\off and 0mm of e1] (L1) { $1$ };
        	\node[below right=\off and 0mm of e1] (L3) { $3$ };
        	\node[below right=\off and 0mm of e0] (L15) { $15$ };
        	\draw[->,dashed] (e0) edge (e1);
        	\draw[->] (e0) edge (L15);
        	\draw[->,dashed] (e1) edge (L1);
        	\draw[->] (e1) edge (L3);
        	\node [above right= -1mm of e0](label) {(a)};
        	\end{scope}
        	
        	\begin{scope}[xshift=1.75*\vshift, yshift=\upline]
            \node[xdd node](L_INF) {\tiny $+\infty$};
            \node [above right= -1mm of L_INF](label) {(b)};
            \end{scope}
            
            \begin{scope}[xshift=3.7*\vshift, yshift=\upline]
            \node[xdd node](L_INF) {\tiny $-\infty$};
            \node [above right= -1mm of L_INF](label) {(c)};
            \end{scope}
            
        	\begin{scope}[xshift=0*\vshift, yshift=\topline-3mm]
        	\node[xdd node](ic1) {\tiny $ic_1$};
        	\node[below left=\off and 2mm of ic1] (L_inf){ $-\infty$};
        	\node[below right=\off and 2mm of ic1] (L2){$2$};
        	\draw[->,dashed] (ic1) edge (L_inf);
        	\draw[->] (ic1) edge (L2);
        	\node [above right=-1mm of ic1](label) {(d)};
        	\end{scope}

            \begin{scope}[xshift=0.8*\vshift, yshift=\topline]
            \node[xdd node](ic1) {\tiny $ic_1$};
            \node[xdd node, below right=\off and 2mm of ic1](e0) {\tiny $e_0$};
            \node[xdd node, below left=\off and 4mm of e0](e1) {\tiny $e_1$};
        	\node[below left=\off and 0mm of e1] (L1) { $1$ };
        	\node[below right=\off and 0mm of e1] (L_inf) { $\infty$ };
        	\draw[->,dashed] (e0) edge (e1);
        	\draw[->] (e0) edge (L_inf);
        	\draw[->,dashed] (e1) edge (L1);
        	\draw[->] (e1) edge (L_inf);
        	\draw[->,dashed] (ic1) edge (L_inf);
        	\draw[->] (ic1) edge (e0);
        	\node [above right=-1mm of ic1](label) {(e)};
        	\end{scope}
        
            \begin{scope}[xshift=1.75*\vshift, yshift=\topline]
            \node[xdd node](ic1) {\tiny $ic_1$};
            \node[xdd node, below right=\off and 2mm of ic1](e0) {\tiny $e_0$};
            \node[xdd node, below left=\off and 4mm of e0](e1) {\tiny $e_1$};
        	\node[below left=\off and 0mm of e1] (L1) { $1$ };
        	\node[below right=\off and 0mm of e1] (L_inf) {$\infty$ };
        	\draw[->,dashed] (e0) edge (e1);
        	\draw[->] (e0) edge (L_inf);
        	\draw[->,dashed] (e1) edge (L1);
        	\draw[->] (e1) edge (L_inf);
        	\draw[->,dashed] (ic1) edge (L_inf);
        	\draw[->] (ic1) edge (e0);
        	\node [above right=-1mm of ic1](label) {(f)};
        	\end{scope}

            \begin{scope}[xshift=2.85*\vshift, yshift=\topline]
            \node[xdd node](ic1){\tiny $ic_1$};
            \node[xdd node, below right=\off and 2mm of ic1](e0) {\tiny $e_0$};
            \node[xdd node, below left=\off and 2mm of e0](e1) {\tiny $e_1$};
        	\node[below left=\off and 0mm of e1] (L_inf) { $\infty$ };
        	\node[below left=\off and 0mm of ic1] (L_minf) { $-\infty$ };
        	\node[below right=\off and 0mm of e1] (L2) { $2$ };
        	\draw[->,dashed] (e0) edge (e1);
        	\draw[->] (e0) edge (L2);
        	\draw[->,dashed] (e1) edge (L_inf);
        	\draw[->] (e1) edge (L2);
        	\draw[->,dashed] (ic1) edge (L_minf);
        	\draw[->] (ic1) edge (e0);
        	\node [above right=-1mm of ic1](label) {(g)};
            \end{scope}
            
            \begin{scope}[xshift=3.65*\vshift, yshift=\topline]
            \node[xdd node](ic1){\tiny $ic_1$};
            \node[xdd node, below right=\off and 2mm of ic1](e0) {\tiny $e_0$};
            \node[xdd node, below left=\off and 2mm of e0](e1) {\tiny $e_1$};
        	\node[below left=\off and 0mm of e1] (L_inf) { $\infty$ };
        	\node[below left=\off and 0mm of ic1] (L_minf) { $-\infty$ };
        	\node[below right=\off and 0mm of e1] (L11) { $11$ };
        	\draw[->,dashed] (e0) edge (e1);
        	\draw[->] (e0) edge (L11);
        	\draw[->,dashed] (e1) edge (L_inf);
        	\draw[->] (e1) edge (L11);
        	\draw[->,dashed] (ic1) edge (L_minf);
        	\draw[->] (ic1) edge (e0);
        	\node [above right=-1mm of ic1](label) {(h)};
            \end{scope}
            
            \begin{scope}[xshift=4.6*\vshift, yshift=\topline]
            \node[xdd node](ic1){\tiny $ic_1$};
            \node[xdd node, below right=\off and 2mm of ic1](e0) {\tiny $e_0$};
            \node[xdd node, below left=\off and 2mm of e0](e1) {\tiny $e_1$};
        	\node[below left=\off and 0mm of e1] (L10) { $10$ };
        	\node[below left=\off and 0mm of ic1] (L_minf) { $-\infty$ };
        	\node[below right=\off and 0mm of e1] (L2) { $2$ };
        	\draw[->,dashed] (e0) edge (e1);
        	\draw[->] (e0) edge (L2);
        	\draw[->,dashed] (e1) edge (L10);
        	\draw[->] (e1) edge (L2);
        	\draw[->,dashed] (ic1) edge (L_minf);
        	\draw[->] (ic1) edge (e0);
        	\node [above right=-1mm of ic1](label) {(i)};
            \end{scope}

            \begin{scope}[xshift=0*\vshift, yshift=\midline]
            \def\off{3mm}
            \node[xdd node](ic1){\tiny $ic_1$};
            \node[xdd node, below right=\off and 2mm of ic1](e0) {\tiny $e_0$};
            \node[xdd node, below left=\off and 2mm of e0](e1) {\tiny $e_1$};
            \node[below left=\off*2 and 2mm of ic1] (L3) {$3$};
        	\node[below left=\off and 0mm of e1] (L19) { $19$ };
        	\node[below right=\off and 2mm of e1] (L11) { $11$ };
        	\draw[->,dashed] (e0) edge (e1);
        	\draw[->] (e0) edge (L11);
        	\draw[->,dashed] (e1) edge (L19);
        	\draw[->] (e1) edge (L11);
        	\draw[->,dashed] (ic1) edge (L3);
        	\draw[->] (ic1) edge (e0);
        	\node [above right=-1mm of ic1](label) { (j) };
        	\end{scope}
        	
            \begin{scope}[xshift=0.8*\vshift, yshift=\midline]
            \def\off{3mm}
            \node[xdd node](ic1) {\tiny $ic_1$};
            \node[xdd node, below left=\off and 1mm of ic1](e0a) {\tiny $e_0$};
            \node[xdd node, below right=\off and 1mm of ic1](e0b) {\tiny $e_0$};
            \node[xdd node, below left=\off and 0mm of e0b](e1b) {\tiny $e_1$};
            \node[xdd node, below left=\off and 0mm of e0a](e1a) {\tiny $e_1$};
            \node[below left=\off and 0mm of e1a](L1) {$1$};
            \node[below right=\off and 0mm of e1a](L3) {$3$};
        	\node[below right=\off and 0mm of e1b] (L11) { $11$ };
        	\node[below right=3*\off and 2mm of e0b] (L_inf) { $\infty$ };
        	\draw[->, dashed] (ic1) edge (e0a);
        	\draw[->] (ic1) edge (e0b);
        	\draw[->, dashed] (e0a) edge (e1a);
        	\draw[->,dashed] (e0b) edge (e1b);
        	\draw[->] (e0a) edge (L_inf);
        	\draw[->, dashed] (e1a) edge (L1);
        	\draw[->] (e1a) edge (L3);
        	\draw[->] (e0b) edge (L_inf);
        	\draw[->,dashed] (e1b) [bend right=90]edge (L_inf);
        	\draw[->] (e1b) edge (L11);
        	\node [above right=-1mm of ic1](label) {(k)};
        	\end{scope}
            
            \begin{scope}[xshift=1.75*\vshift, yshift=\midline]
            \def\off{3mm}
            \node[xdd node](ic1b){\tiny$ic_1$};
            
            \node[xdd node, below left=\off and 0mm of ic1b](e0b){\tiny $e_0$};
            \node[xdd node, below right=\off and 0mm of ic1b](e0c){\tiny $e_0$};
            
            \node[xdd node, below left=\off and 0mm of e0b](e1b){\tiny $e_1$};
            \node[xdd node, below left=\off and 0mm of e0c](e1c){\tiny $e_1$};
            
            \node[below left=\off and -1mm of e1b](L1){$1$};
            \node[below right=\off and 0mm of e1b](L3){$3$};
            \node[below right=\off and 0mm of e1c](L11){$11$};
            
            
            \draw[->, dashed] (ic1b) edge (e0b);
            \draw[->] (ic1b) edge (e0c);
            
            \draw[->, dashed] (e0b) edge (e1b);
            \draw[->, dashed] (e0c) edge (e1c);
            \node[below right= 1.5*\off and 3mm of e0c] (L_inf2){$\infty$};
            \draw[->] (e0b) edge (L_inf2);
            

            \draw[->, dashed] (e1b) edge (L1);
            \draw[->, dashed] (e1c) edge (L1);
            \draw[->] (e1b) edge (L3);
            \draw[->] (e1c) edge (L11);
            
            \draw[->] (e0c) edge (L_inf2);

            \node [above right=-1mm of ic1b](label) {(l)};
            \end{scope}
                
            \begin{scope}[xshift=2.83*\vshift, yshift=\midline-3mm]
            \def\off{3mm}
            \node[xdd node](ic1) {\tiny $ic_1$};
            \node[xdd node, below left=\off and 0mm of ic1](e0a) {\tiny $e_0$};
            \node[xdd node, below right=\off and 0mm of ic1](e0b) {\tiny $e_0$};
            \node[below right = \off and 0mm of e0b] (L11) { $11$ };
            \node[below right = \off and 0mm of e0a] (L3) { $3$ };
            \node[below left=\off and 0mm of e0a](L_inf) {$\infty$};
            \draw[->, dashed] (e0b) edge (L_inf);
            \draw[->] (e0b) edge (L11);
            \draw[->, dashed] (ic1) edge (e0a);
            \draw[->] (ic1) edge (e0b);
            \draw[->, dashed] (e0a) edge (L_inf);
            \draw[->] (e0a) edge (L3);
            \node [above right=-1mm of ic1](label) {(m)};
            \end{scope}
            
            \begin{scope}[xshift=3.65*\vshift, yshift=\midline-3mm]
            \node[xdd node](ic1b) {\tiny $ic_1$};
            
            \node[xdd node, below left=\off and 0mm of ic1b] (e0b) {\tiny$e_0$};
            \node[xdd node, below right=\off and 0mm of ic1b] (e0c) {\tiny$e_0$};
            
            \node[below left=\off and 0mm of e0b](L_inf1){$\infty$};
            \node[below right=\off and 0mm of e0b](L12){$12$};
            \node[below right=\off and 0mm of e0c](L20){$20$};
            
            
        
            \draw[->, dashed] (ic1b) edge (e0b);
            \draw[->] (ic1b) edge (e0c);
            
            \draw[->, dashed] (e0b) edge (L_inf1);
            \draw[->] (e0b) edge (L12);
            \draw[->, dashed] (e0c) edge (L_inf1);
            \draw[->] (e0c) edge (L20);
            
            
        	\node [above right=-1mm of ic1b](label) {(n)};
            \end{scope}
           
            \begin{scope}[xshift=4.6*\vshift, yshift=\midline]
            \def\off{3mm}
            \node[xdd node](ic1) {\tiny $ic_1$};
            \node[xdd node, below left=\off and 3mm of ic1](e0a) {\tiny $e_0$};
            \node[xdd node, below right=\off and 2mm of ic1](e0b) {\tiny $e_0$};
            \node[below right = \off and 1mm of e0b] (L11) { $11$ };
            \node[xdd node, below left=\off and 0mm of e0a](e1) {\tiny$e_1$};
            \node[xdd node, below left=\off and 0mm of e0b](e1b) {\tiny$e_1$};
            \node[below right = \off and 1mm of e1] (L12) { $12$ };
            \node[below right = \off and 1mm of e0a] (L3) { $3$ };
            \node[below left=\off and 2mm of e1](L10) {$10$};
            \node[below right=\off of e1b](L20) {$20$};
            
            \draw[->] (e0b) edge (L11);
            \draw[->, dashed] (e0b) edge (e1b);
            \draw[->, dashed] (ic1) edge (e0a);
            \draw[->] (ic1) edge (e0b);
            \draw[->, dashed] (e0a) edge (e1);
            \draw[->] (e0a) edge (L3);
            \draw[->, dashed] (e1) edge (L10);
            \draw[->] (e1) edge (L12);
            \draw[->, dashed] (e1b) edge (L10);
            \draw[->] (e1b) edge (L20);
            \node [above right=-1mm of ic1](label) {(o)};
            \end{scope}

            \draw[-] (-0.7*\vshift, \upline+7mm) edge (5*\vshift, \upline+7mm);
            \draw[-] (-0.45*\vshift, \upline+15mm) edge (-0.45*\vshift, \botline);
    \end{tikzpicture}
    \caption{Batch bus scheduling with $XDD$s.}
    \label{fig:xdds-during-contention}
\end{figure*}
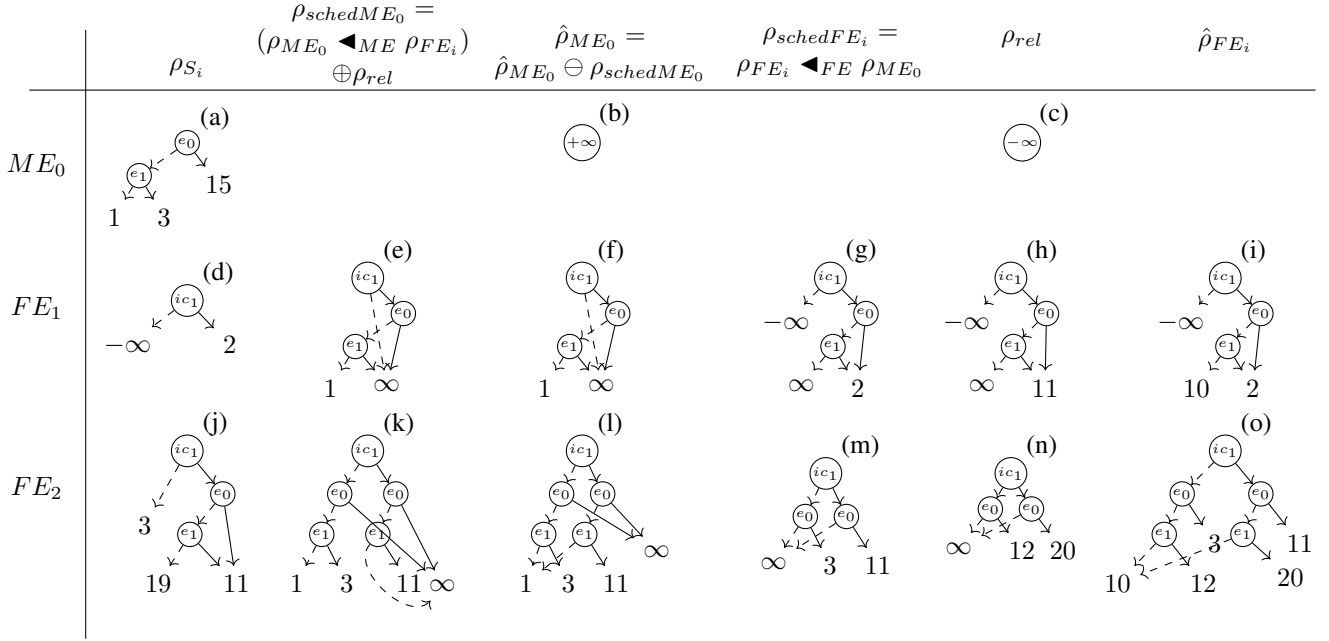

\subsection{Batch bus scheduling with \XDD}

Table~\ref{tab:scheduling-me0} shows the schedule of $ME_0$ for a fixed configuration. Yet, the times are recorded with {\XDD}s and a particular \XDD may support configurations with different schedules. Figure~\ref{fig:xdds-during-contention} shows how the bus contention scheduling presented in the previous paragraph is extended to {\XDD}s. Let us consider a lightly simpler scenario: $ME_0$ may be delayed by $FE_1$ and $FE_2$.  The instruction memory access at $FE_1$ may experiment instruction cache Hits or Misses represented by event $ic_1$. The access at $FE_2$ is classified as Always Miss, meaning that it always requests the bus. The latency of bus access is 9 cycles.

\XDD~(a) shows the ready time of $ME_0$ and (b) the initial value of $\rhohat_{ME_0}$, the scheduling time of $ME_0$ on the bus ($+\infty$ means that no access is yet scheduled). (c) shows the initial value of $\rho_{rel}$, recording the release time of the bus by $FE_i$ ($-\infty$ denotes that the bus is not used by any $FE_i$ for now). 

The ready time of $FE_1$ (d) is computed from the initial state $\vec{S}_0$ and the matrix between $ME_0$ and $FE_1$. The event $ic_1$ indicates with $-\infty$ the configuration where $FE_1$ does not use the bus  (hence it is not concerned by the contention).

$\rho_{ME_0}$~(a) and $\rho_{FE_1}$~(d) are compared using $\leftME$ to get the configurations and the time -- $\rho_{schedME_0}$~(e) where $ME_0$ takes the bus, i.e. is scheduled, before $FE_1$ ($\leftME$ is formally defined in Eq.~\ref{eq:triangle}). Other configurations are assigned $+\infty$ denoting they are not processed yet. Notice that $-\infty$ configurations in $\rho_{FE_i}$ does not allow $ME_0$ to be scheduled as subsequent $FE_{j>i}$ might allocate the bus before $ME_0$. $\rho_{schedME_0}$ is then used to update $\rhohat_{ME_0}$ using the minimum operator $\ominus$~(f). 

$\rho_{schedFE_1}$~(g), the configurations where $FE_1$ gets the bus is computed in a similar way as $\rho_{schedME_0}$ but with operator $\leftFE$ that selects the configurations of $FE$ with the strict $<$ comparison instead of $\leq$ because $ME$ stage has priority over $FE$ stage. By adding the latency of the bus ($\lambda_{BUS}$) to $\rho_{schedFE_1}$, we are able to update, using $\oplus$, the release time of the bus after $FE_1$ -- $\rho_{rel}$~(h). Finally, we compute the actual schedule of $FE_1$ -- $\rhohat_{FE_1}$ (i) which is the time of $\rho_{schedFE_1}$ if $FE_1$ is scheduled, otherwise the release time of the bus by $ME_0$ ($\rhohat_{ME_0} \otimes \lambda_{BUS}$). Now, as the actual schedule of $FE_1$ is known, the release time of the bus at $FE_1$ is computed and is used to adjust the \emph{temporal state}. By multiplying the state $\vec{S}_{FE_1}$ by the matrix $M_{FE_1-FE_2}$, we get the ready time of $FE_2$~(j).
In the second iteration, first, $\rho_{ME_0}$~(a) is compared with $\rho_{FE_2}$~(j) with the operator $\leftME$. The actual scheduling time $\rho_{schedME_0}$~(k) is computed by considering the maximum  between $\rho_{schedME_0}$ and the release time of the bus by $FE_1$ ($\rho_{rel}$) according to the third column of Table~\ref{tab:scheduling-me0}. Then, $\rhohat_{ME_0}$ is updated (l).
The schedule of $FE_2$ -- $\rho_{schedFE_2}$~(m) is computed with the operator \leftFE applied to $\rho_{FE_2}$ and $\rho_{ME_0}$ which is then used to update the release time of the bus $\rho_{rel}$~(n). The actual schedule of $FE_2$ is computed with respect to the use of the bus by $ME_0$~(o).

When the end of the sequence is reached, there are no further subsequent instructions that may contend with $ME_0$ and the remaining $+\infty$ in $\rhohat_{ME_0}$ represents configurations accessing the bus after $FE_1$ and $FE_2$. They are replaced by the maximum between the ready time of $ME_0$ and the release times of the bus by $FE_1$ and $FE_2$, $\rho_{rel}$.

Operators $\leftME$ and $\leftFE$ have a straight-forward definitions setting to $+\infty$ the configurations where $ME$, respectively $FE$, does not get the bus:

\begin{equation}
    \begin{aligned}
        & \forall \SF_{ME}, \SF_{FE} \in \XDD^2, \forall \gamma \in \Gamma, \\
        & (\SF_{ME} \leftME \SF_{FE})[\gamma] = \begin{cases}
            \SF_{ME}[\gamma]    & \text{ if  } \SF_{ME}[\gamma] \le \SF_{FE}[\gamma], \\
            +\infty             & \text{ otherwise} \\
        \end{cases} \\
        & (\SF_{FE} \leftFE \SF_{ME})[\gamma] = \begin{cases}
            \SF_{FE}[\gamma]    & \text{ if  } \SF_{FE}[\gamma] < \SF_{ME}[\gamma], \\
            +\infty             & \text{ otherwise}
        \end{cases}
    \end{aligned}
    \label{eq:triangle}
\end{equation}

All these calculations seems a bit complex but it must be kept in mind that real {\XDD}s are much more complex with much more configurations and relying on the \XDD operators allows to benefit from the {\XDD}s optimizations.
\vspace*{-3mm}
\subsection{Contention Analysis}
The contention analysis depicted in the preceding example is described more formally in this paragraph.  Basically, the pipeline analysis is extended by splitting a \BB at \emph{contention points}, the \XG node where a bus access may occur i.e. $ME$ or $FE$ stages causing cache misses. Then they are grouped in a sequence of one $ME$ access followed by zero or several $FE$ accesses, $(ME_0, FE_{0 < i \le n})$.  The instructions between \emph{contention points} are summarized by a pre-computed matrix.

Algorithm~\ref{algo:contention} is then applied to compute the possible interleaving of bus accesses for all configurations of the sequence $(ME_0, FE_{0 < i \le n})$.
Additionally, it takes as input the \emph{temporal state} $\vec{S}_0$. The result is the definitive schedule of $ME_0$ -- $\rhohat_{ME_0}$ and of $FE_i$ -- $\rhohat_{FE_i}$.

\begin{algorithm}[ht]
        \LinesNumbered \DontPrintSemicolon
        \KwIn{
            $\vec{S}_0 \in \PipeStates$, $(ME_0 ,FE_{1 \le i \le n})$ \\
        }
        \KwOut{$(\rhohat_{ME_0}, \rhohat_{FE_{1 \le i \le n}})$}
        $\rhohat_{ME_0} = \LEAF(+\infty)$ \;
        $\rho_{rel} := \LEAF(-\infty)$\;
        $\vec{S}_{FE_1} := \vec{S}_{ME_0} \cdot M_{ME_0-FE1}$ \;
        $i := 1;$ \;
        $\rho_{ME_0} := \vec{S}_0[i_\rho]$\;
        \While{ $i \le n \wedge (\exists \gamma \in \Gamma \land \rhohat_{ME_0}[\gamma] = +\infty)$}{
            \If{$FE_i.mustUseBus()$}{
                $\rho_{FE_i} := \vec{S}_{FE_i}[i_{FE}]$
            }
            \Else{
                $\rho_{FE_i} := \vec{S}_{FE_i}[i_{FE}] \otimes \NODE(ic_i, -\infty, 0) $
            }
            $\rho_{schedME_0} := (\rho_{ME_0} \leftME \rho_{FE_i}) \oplus \rho_{rel}$ \;
            $\rhohat_{ME_0} := \rhohat_{ME_0} \ominus \rho_{sched}$ \;
            \medskip
            $\rho_{schedFE_i} := \rho_{FE_i} \blacktriangleleft_{FE} \rho_{ME_0}$ \;
            $\rho_{rel} :=  \rho_{rel} \oplus (\rho_{schedFE_i} \otimes \lambda_{BUS})$ \;
            \medskip
            $\rhohat_{FE_i} := \rho_{schedFE_i} \ominus (\rhohat_{ME_0} \otimes \lambda_{ME_0})$ \;
            $\vec{S}_{FE_{i+1}} := (\vec{S}_{FE_i} \oplus [\Zero, ..., \Zero, \rhohat_{FE_i}\otimes \lambda_{BUS}]) \cdot M_{FE_i-FE_i+1} $\;
            $i = i + 1$ \;
        }
        $\rhohat_{ME_0} := \rhohat_{ME_0} \ominus (\rho_{rel} \oplus \rho_{ME_0})$ \;
        \caption{Contention computation.}
        \label{algo:contention}
\end{algorithm}

Initially, $ME_0$ is considered as not scheduled whatever the considered configuration and $\rhohat_{ME_0}$ is set to $LEAF(+\infty)$ (line 1). It will then be updated after considering the contention with each subsequent $FE_i$. When $ME_0$ does not contain $+\infty$ anymore or when all $FE_i$ has been processed, $ME_0$ schedule is complete (condition at line 6). Line~2 initializes $\rho_{rel}$ that records the release time of the bus by $FE_i$ to $-\infty$ as no $FE_i$ has been processed yet.

In line 3, the \emph{temporal state} just before $FE_1$ is computed by applying the matrix $M_{ME_0-FE_1}$ to the initial state $\vec{S}_0$; $i$ is initialized in line~4 and will range over the \emph{Contention Points}, 1 to $n$. The ready time of $ME_0$ is recorded into $\rho_{ME_0}$ at line~5.
Lines 7-10 compute the ready time of $FE_i$ if the access results always or sometimes in a Miss (according to $mustUseBus()$). The latter case is expressed by the event $ic_i$ and by adding the $\NODE(ic_i, -\infty, 0)$ to $\rho_{FE_i}$: $-\infty$ denotes the case where $ic_i$ does not arise and there is no bus access.

$\rho_{schedME_0}$, $ME_0$ configurations getting the bus before $FE_i$, is computed with \leftME at line~11 by comparing the ready time of $ME_0$ with the ones of $FE_i$. According to the last column of Table~\ref{tab:scheduling-me0}, these configurations are fixed by taking the maximum between the ready time of $ME_0$ and the release time of the bus $\rho_{rel}$. The schedule of $ME_0$ at this iteration is accumulated in the definitive schedule of $ME_0$ at line~12.
At line~13, the schedule of $FE_i$ is computed. Notice that as the ready time of $FE_i$ contains $-\infty$ to denote the case where it does not use the bus, these $-\infty$ are kept in $\rho_{shcedFE_i}$. By adding the bus latency $\lambda_{BUS}$ to $\rho_{schedFE_i}$ and then $\oplus$ with $\rho_{rel}$, the release time of the bus is only updated for configurations $\gamma$ where $FE_i$ uses and gets the~bus -- $\rho_{schedFE_i}[\gamma] \ne +\infty$ (line~14). Notice that the $+\infty$ in $\rho_{rel}$ cannot overwrite the release time of the bus by $FE_i$ because $FE_i$ cannot get the bus if any prior $FE_{j<i}$ does not get the~bus.
At line~15, the actual schedule of $FE_i$ is computed by replacing the $+\infty$ in $\rho_{schedFE_i}$ (where $FE_i$ loses contention in favor of $ME_0$) by the release time of the bus by $ME_0$. Configurations where time is $+\infty$ in $\rho_{schedFE_i}$ must not be $+\infty$ in $\rhohat_{ME_0}$ because only one of both $FE_i$ or $ME_0$ is scheduled. However, as $\rho_{schedFE_i}$ configurations different from $+\infty$  are lower than $\rhohat_{ME_0}$ (otherwise it is considered as non-scheduled), $\ominus$ can be used to implement the replacement.

At line~16, the \emph{temporal state} is updated regarding the schedules of $FE_i$, by applying $\oplus$ between the time pointer of the state vector and the release time of the bus by $FE_i$. The updated state is then multiplied by matrix $M_{FE_i-FE_{i+1}}$ to obtain the ready time of $FE_{i+1}$.
Line ~18 takes into account the $+\infty$ configurations remaining in $\rhohat_{ME_0}$ that are not already scheduled by the loop. The times assigned to these configurations are the maximum between the ready time of $ME_0$ and the bus release time by $FE_i$. Notice that $+\infty$ in the $\rhohat_{ME_0}$ may also be caused by the fact that none of $FE_i$ have used the bus: this time is recorded as $-\infty$ in $\rho_{rel}$ and is hence automatically overwritten by the ready time of $ME_0$.

\section{Experiments}
\label{sec:experiments}
The performance of the analysis strongly depends on the size of the {\XDD}s in the pipeline states and the number of pipeline states. Both characteristics are related to some inherent properties of the analyzed program and of the micro-architecture whose impact is difficult to estimate. Therefore, we experiment our analysis on realistic benchmarks that empirically provides a better understanding of the performances.
\subsection{Experiment Setup}
The pipeline used in the examples of the previous sections was chosen to improve the readability of the article.  For the experimentation, we prefer a more powerful micro-architecture with more parallelism leading to more complex \emph{temporal states}. In addition, this new pipeline allows to demonstrate the scalability of our approach.

The experimentation pipeline has 4 stages able to process 4 instructions per cycle: FE, DE, EX, CM. It fetches instructions from the main memory in the FE stage via a single level instruction cache. The FE stage is able to fetch simultaneously 4 instructions of the same memory block with a latency of 7 cycles for miss (without considering contention). The DE stage decodes the instructions and the EX stage handles all arithmetic, floating point and memory related operations in several Functional Units (FU). 4 ALUs (Arithmetic and Logic Units) are available and can be simultaneously used if no data dependencies are present. The latency of arithmetic operations is 1 cycle for addition and subtraction; 2 cycles for multiplication and 7 cycles for division. 1 FPU (Floating Point Unit) is available with latencies of 3 cycles for addition and substraction, 5 cycles for multiplication and 12 cycles for division. One MU (Memory Unit) is also available to handle memory related operations (load and store). In case of a multiple load/store operations, the memory accesses are performed in order, and if one load/store needs to use the bus, it occupies the bus until all loads/stores are completed. The latency of memory accesses is the same as for FE stage. An issue buffer at EX stage distributes the instruction to EX FUs with respect to the operations realized by the instruction. Instructions using the same FU are executed in-order in EX stage; instructions using different FUs are executed out-of-order (if no data dependencies exist). 

The instruction cache is a 16 KBits 2-way set associative LRU (Least Recent Used) cache. The data cache is a 8 KBits 2-way set associative LRU cache. Both caches have only one level and share the same bus to access the main memory. We think that this architecture is representative of mid-range processors used in real-time embedded systems.

The whole CFG analysis is implemented using the OTAWA toolbox~\cite{otawa}. Global analyses, including instruction and data cache analyses, control flow analyses etc. are provided by OTAWA.
The benchmarks are taken from the TACLe suite~\cite{taclebench} compiled for armv7 instruction set with hard floating point unit. Among 83 tasks to be analyzed, 7 of them\footnote{$pm, recursion, quicksort, huff\_enc, mpeg2, gsm\_enc, ammunition$}  failed due to limitations in OTAWA.



\vspace*{-3mm}
\subsection{Number of Temporal States}
The first experiment explores the number of \emph{temporal states} along the edges of {\BB}s over all benchmarks (representing the output of the source {\BB}s and the input of sink {\BB}s). The experimental results are shown in Figure~\ref{fig:nb-states}. The x-axis is the number of pipeline states, the y-axis is in logarithm scale and shows the number of edges for each quantity of states. The displayed statistics accumulate data from all TACLe's benchmarks. 
The risk, with our approach, is to face to a blowup in the number of states. Fortunately, the experimentation shows that most of the edges have less than 20 output states except in some rare cases where the number of states is much higher. This generally means that most of timing variations due to events are efficiently represented in the {\XDD}s of the \emph{temporal states}. As expected, the {\XDD}s successfully prevent the state explosion and keep the pipeline analysis tractable at CFG level.
The presence of some rare cases that have a lot of states is not blocking as the analysis time is reasonable in most of cases (confer Section~\ref{sec:analysis-time}).
\vspace*{-2mm}
\captionsetup{belowskip=-15pt}
\begin{figure}[h]
    \centering
    \includegraphics[width=0.49\textwidth]{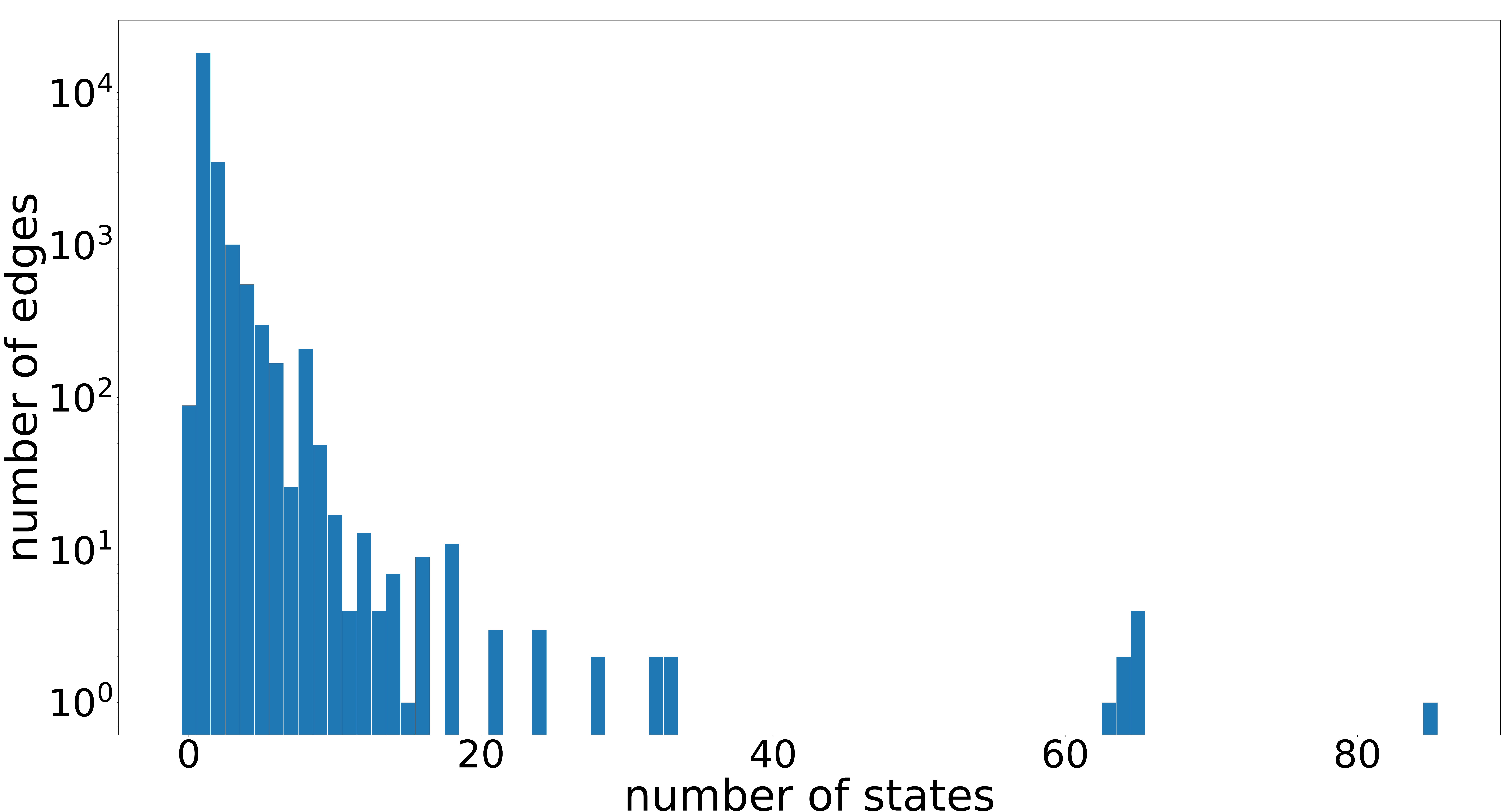}
    \caption{The distribution of number of pipeline states.}
    \label{fig:nb-states}
\end{figure}

\subsection{Events Lifetime}
The second experiment measures the lifetime of events during the analysis. The longer the lifetime of events, the larger the complexity of the analysis in terms of state number and \XDD size. In our micro-architecture, an event is created by a cache access and may  
disappear from the \XDD during the analysis, for two reasons. (a) It is absorbed by the pipeline: for example, when an instruction stalls at EX stage due to a data cache miss, next instructions may go on in the pipeline completely hiding the stalling time. This event will only stay alive in a short time window during the analysis of other instructions executed in parallel. (b) The events are \emph{stabilized} and disappear thanks to the rebasing operation. Intuitively, we assume that in most situations the events raised by an instruction only impact nearby instructions. This is demonstrated by tracking the liveness of events in the analysis.

However, the pipeline analysis is only able to provide this information at the granularity of \emph{Contention Points} because the instruction execution effect between \emph{Contention Points} is summed up by matrices. As collecting these statistics at a finer granularity would have an important adverse effect on the analysis time, we survey the liveness of events on this basis. Events are deemed as dead at \emph{Contention Points} whose \emph{temporal states} does not contain the event regarding all {\XDD}s contained in the vector. Thus, the lifetime statistics are over-estimated by the number of instructions between \emph{Contention Points}. Besides, the pipeline states are only rebased at the end of {\BB}s so the lifetime of events in the middle of {\BB}s does not consider the potential death due to rebasing. In the end, the measured lifetime in this experimentation is an over-estimation of the actual lifetime of events.

Figure~\ref{fig:events-lifetime} shows the experimental results. The x-axis is the lifetime of events (in instructions with limitations described above) and  y-axis, in logarithm scale, shows the number of events having this lifetime. These are also accumulated from the whole set of TACLe's benchmarks. The statistics show that most of events have short lifetime (below 50 instructions). We have observed a unique lifetime of 602 instructions that is not represented to keep the figure readable. It turns out that in most of situation where the lifetime is greater than 50, the events are in a \BB of a big number of instructions. In the extreme case with 602-instruction event lifetime, the involved \BB is made of 617 instructions (in benchmark \emph{md5}) and the reported lifetime is an effect of the granularity level.
Despite this very infrequent case, as most of events have short lifetime, the \emph{temporal state} size (sum of {\XDD}s sizes of the vector) stays reasonable and the analysis remains efficient.

\vspace*{-2mm}
\subsection{Analysis time}
\label{sec:analysis-time}
\begin{figure}[h]
    \centering
    \includegraphics[width=0.49\textwidth]{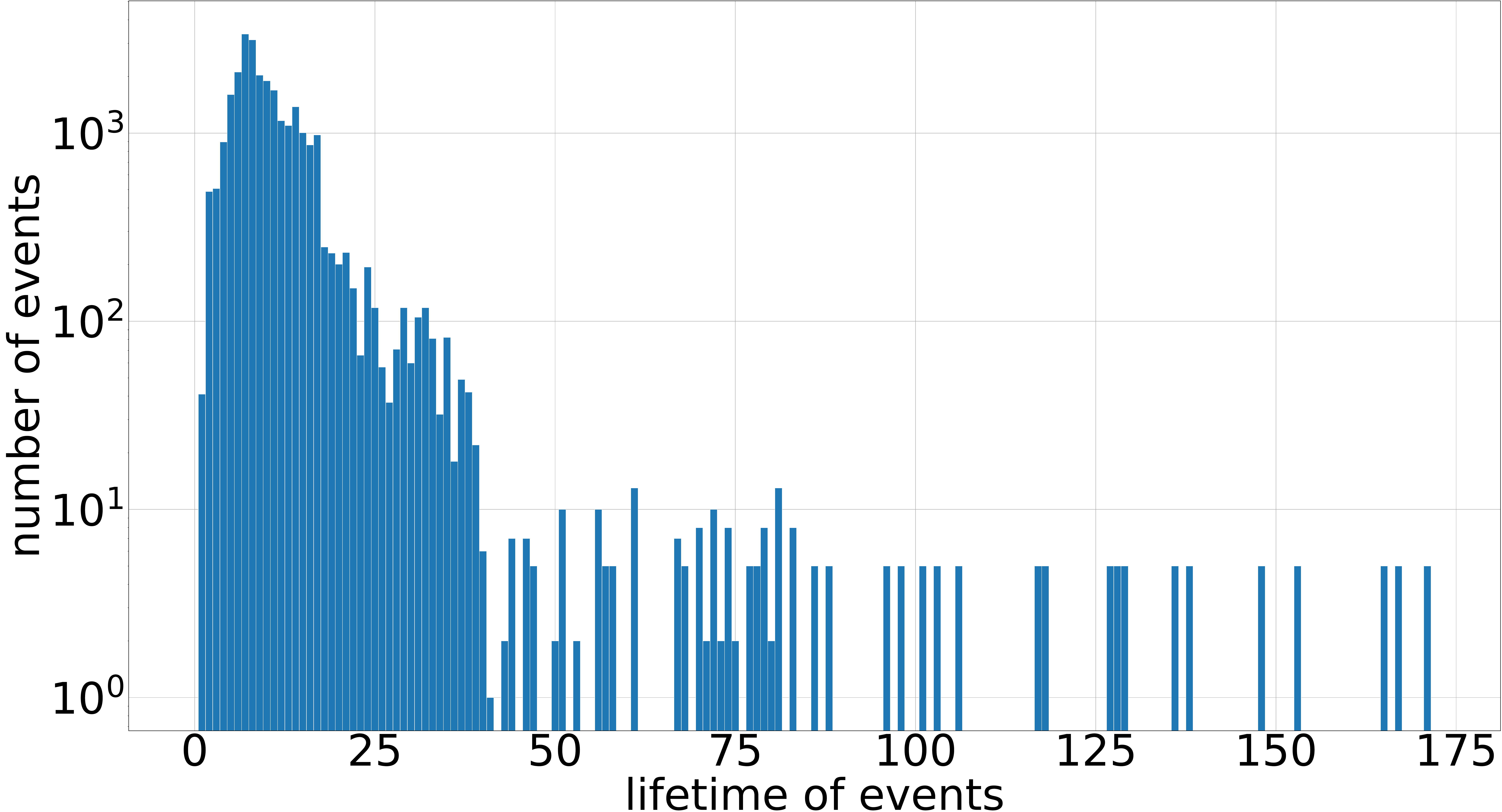}
    \caption{The distribution of events lifetime.}
    \label{fig:events-lifetime}
\end{figure}
The analysis time includes the time to pre-compute the matrices and the time of the pipeline analysis on the CFGs. The measurement is performed on a virtual machine running on a cloud server with 8GB RAM and 4-core Intel Broadwell processors. Only 2 cores are occupied simultaneously to run the benchmarks. We have also measured the analysis time without pre-computing the matrices (with a timeout of 1 hour) in order to clarify the advantage of that optimization. The results are shown in Figure~\ref{fig:analysis-time}. The x-axis shows the benchmarks and the y-axis provides the analysis time in seconds with a logarithmic scale. The analysis time with matrices is recorded as green bars. At worst, the analysis with matrices finishes in 553s (9m13s). In most cases, the analysis finishes in about $1-20s$. In contrast, the analysis without matrix has both memory usage and speed issues as shown by the red (crashed because out of 8G RAM) and yellow bars (timeout after 1~hour). For those finishing within 1 hour (blue bars), matrix optimization brings 217\% speed-up in average\footnote{average speed up = sum of the time of all benchmarks without matrix divided by sum of time with matrix.}. The rare cases where the analysis without matrix is faster are simple benchmarks where the cost of computing the matrices is not compensated by the speed-up. This result reveals that the pre-computation of matrices effectively reduces intermediate redundant computations, which enhances the analysis performance in terms of speed and memory usage. Considering the presence of some exceptional cases, large number of \emph{temporal states}, long lifetime events, we think that the analysis is still able to handle them and finishes in a reasonable time. Moreover, we observed that the industry-like applications encompassed in TACLe benchmarks -- Rosace, Debbie and Papabench, are analyzed in short times -- respectively 15s, 10mn, 15mn summing the times of all tasks composing them. Therefore, our approach could be used in industrial real-time applications (for example, Airbus requires at most 48H between the detection of a bug and the distribution of the fix, including temporal verification).
\begin{figure*}[h]
    \centering
    \includegraphics[width=0.95\textwidth]{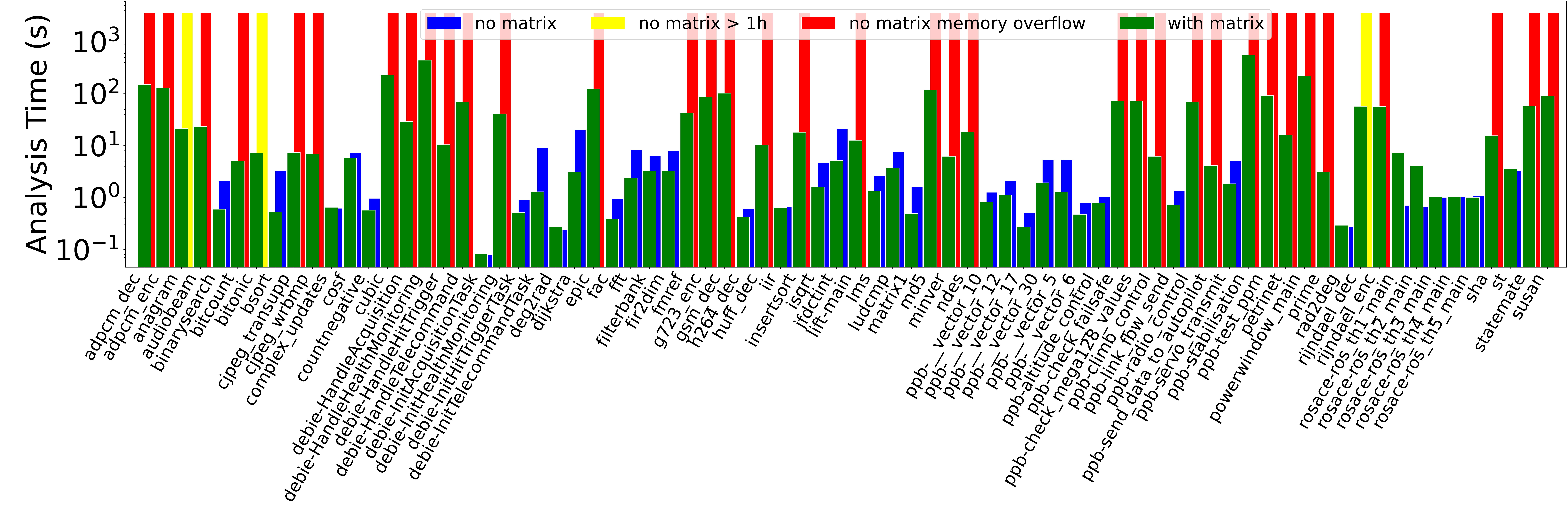}
    \vspace*{-0.8cm}
    \caption{The analysis time.}
    \label{fig:analysis-time}
\end{figure*}

\vspace*{-2mm}
\section{Related Works}
\label{sec:related}

The pipeline model used in the aiT WCET analyzer is maybe the most successful and the most used pipeline analysis\cite{schneider_pipeline_1999, thesing_safe_2004}. They define the state
by the time left in each resource of the pipeline and update it at the granularity of the processor cycle. Based on Abstract Interpretation framework\cite{abstract-interpretation0}, according to our knowledge, they use power set domain to keep the set of possible pipeline states. Therefore, this model also suffers from combinatorial complexity caused by the presence of timing anomalies as it has to keep all possibles states. They have proposed several approaches to reduce the complexity. (a) Although the literature provides very few details about that, close states seem to be joined to form abstract states at the cost of loss of precision, (b) in~\cite{wilhelm2011symbolic}, they show how to use Binary Decision Diagram to compress the state machine representation of their analysis system. Reineke et al. in~\cite{reineke_sound_2009} defines sufficient condition to drop \emph{not-worst} cases in order to reduce the number of states. This work has been extended later in \cite{hahn_toward_2015, hahn_design_2020} that provides a theoretical basis to design strict-in-order pipeline where timing anomalies are proven to not occur~\cite{wenzel_principles_2005, engblom_processor_2002-1}, thus allowing to more easily drop  non-worst case states. However, for now, the price to pay for the design of timing-anomaly-free pipelines is still a significant loss of performance.

Model checking can also be used for WCET analyses \cite{cassez-model-checking-timed-games,tic-model-checking}. With the help of mature theories and tools of model checking, the solving procedure itself is well optimized and is independent of the timed model of the target program and the micro-architecture which eases the design of modular and flexible analyses\cite{cassez-model-checking, dalsgaard-model-checking}. 
However, in general, model checking-based analyses have to completely explore the domain of traces of the program and the domain of program inputs. On the one hand, this provides tight WCETs without over-estimation, and precise information about the worst execution pattern. On the other hand, the large search domain combined with the micro-architecture model complexity questions the scalability of these analyses for complex program and architectures.

Another approach to pipeline analysis is the \emph{Execution Graph} proposed by Li et al \cite{xianfeng_li_modeling_2004}, close to what is presented in Section~\ref{sec:XG}. They analyze the WCET at the scope of {\BB}s and calculate the worst execution context. To support timing variation, the \XG computational model uses intervals to representing minimum and maximum times. The contention between instructions is considered by checking the intersection of time intervals. If a contention occurs, the interval is extended accordingly. The \XG solving algorithm repeats the computation until a fix-point is reached. However, in the presence of lots of events the interval representation tends to trigger a chain reaction: the imprecision due to the interval representation create contentions that are actually impossible which extends the interval and involves more impossible contentions. Moreover, with respect to the micro-architecture, making precise assumption on the worst execution context is not always trivial. Another \XG based approach is proposed by Rochange et al. in~\cite{rochange_context-parameterized_2009} that computes the execution time of {\BB}s for each combination of events what makes the algorithm to tend toward combinatorial complexity. In addition, the contention analysis requires to examine all cases leading to an  exponential complexity.

\vspace*{-3mm}
\section{Conclusion}
\label{sec:conclusion}
In this paper, we have formally defined a state representation useful for the pipeline time analysis. It is derived from the \XG model but the times are replaced by {\XDD}s that efficiently represent time variation caused by the raise of events. An \XDD is a data structure working as a lossless compression of a map between the event configurations and the execution time. {\XDD}'s advantage is its ability to compact the time variation 
compared to the use of power set map. Moreover, we represent the pipeline state as a vector of {\XDD}s. This simplifies the design of pipeline analysis at \CFG level and allows to leverage the algebraic properties of {\XDD}s to represent the analysis of instruction sequences as matrices multiplications. These matrices can be statically determined before the analysis which significantly speeds up the analysis.
Together with rebasing and generation number, the presented analysis enables the tracking of exact timing behavior over the CFG of program.

Secondly, we extend this analysis to support the shared bus between fetch and memory pipeline stages. The shared bus is dynamically allocated and may experiment out-of-order access. \emph{Temporal states} obtained so far are used to track precisely the bus accesses schedule. Based on the survey of the topology of the bus usage by instructions, we designed a contention analysis to support bus access times in the \emph{temporal states}. The contention analysis needs only to be invoked upon \emph{contention points} while instructions in-between are summarized by the aforementioned matrices.


The experimentation has been conducted on TACLe's benchmarks. The measurement of the number of pipeline states per edge in the \CFG showed that our approach is able to efficiently represent the timing variations. Then, we produce a rough (but conservative) evaluation of the event lifetime that shows that the effect of events are generally short term in the \CFG.
Some exceptions are observed (edges with a lot of states or events with long lifetime), but they are not problematic as they are very infrequent. The analysis time shows that our analysis is very efficient and suggests that it could be used for industrial applications. 

As future works, we could benefit from the exact tracking of the \emph{temporal states} to more precisely qualify the effects of timing variations in different micro-architectures. This could be used to eventually qualify good or bad micro-architecture design in terms of predictability. This may help to find better compromises in the design of predictable pipelines, which could also alleviate over-stringent constraints on the pipeline such as strict-in-order execution, that often limits the performance of the processor.
We plan to extend our approach
to all out-of-order resources. Although our operators and matrices calculation are correct whatever the out-of-order resource, the modeling of the interleaving of resource acquisitions might not scale.


\vspace*{-3mm}
\bibliographystyle{IEEEtran}
\bibliography{IEEEabrv, main}

\begin{thebibliography}{10}
\providecommand{\url}[1]{#1}
\csname url@samestyle\endcsname
\providecommand{\newblock}{\relax}
\providecommand{\bibinfo}[2]{#2}
\providecommand{\BIBentrySTDinterwordspacing}{\spaceskip=0pt\relax}
\providecommand{\BIBentryALTinterwordstretchfactor}{4}
\providecommand{\BIBentryALTinterwordspacing}{\spaceskip=\fontdimen2\font plus
\BIBentryALTinterwordstretchfactor\fontdimen3\font minus
  \fontdimen4\font\relax}
\providecommand{\BIBforeignlanguage}[2]{{%
\expandafter\ifx\csname l@#1\endcsname\relax
\typeout{** WARNING: IEEEtran.bst: No hyphenation pattern has been}%
\typeout{** loaded for the language `#1'. Using the pattern for}%
\typeout{** the default language instead.}%
\else
\language=\csname l@#1\endcsname
\fi
#2}}
\providecommand{\BIBdecl}{\relax}
\BIBdecl

\bibitem{bai_improving_2020}
Z.~Bai, H.~Cassé, M.~De~Michiel, T.~Carle, and C.~Rochange, ``Improving the
  {Performance} of {WCET} {Analysis} in the {Presence} of {Variable}
  {Latencies},'' in \emph{{ACM} {SIGPLAN}/{SIGBED} LCTES}, 2020.

\bibitem{lundqvist_timing_1999}
T.~Lundqvist and P.~Stenstrom, ``Timing anomalies in dynamically scheduled
  microprocessors,'' in \emph{IEEE RTSS}, 1999.

\bibitem{eisinger_automatic_2006}
J.~Eisinger, I.~Polian, B.~Becker, S.~Thesing, R.~Wilhelm, and A.~Metzner,
  ``Automatic {Identification} of {Timing} {Anomalies} for {Cycle}-{Accurate}
  {Worst}-{Case} {Execution} {Time} {Analysis},'' in \emph{{IEEE} DDECS}, 2006.

\bibitem{gebhard2010timing}
G.~Gebhard, ``Timing anomalies reloaded,'' in \emph{WCET 2010}.

\bibitem{cassez-timing-anomalies}
F.~Cassez, R.~R. Hansen, and M.~C. Olesen, ``What is a timing anomaly?'' in
  \emph{WCET}, 2012.

\bibitem{wenzel_principles_2005}
I.~Wenzel, R.~Kirner, P.~Puschner, and B.~Rieder, ``Principles of timing
  anomalies in superscalar processors,'' in \emph{{QSIC}'05}.

\bibitem{reineke_definition_2006}
J.~Reineke, B.~Wachter, S.~Thesing, R.~Wilhelm, I.~Polian, J.~Eisinger, and
  B.~Becker, ``A {Definition} and {Classification} of {Timing} {Anomalies},''
  in \emph{{WCET}'06}.

\bibitem{wilhelm_static_2010}
R.~Wilhelm, S.~Altmeyer, C.~Burguière, D.~Grund, J.~Herter, J.~Reineke,
  B.~Wachter, and S.~Wilhelm, ``Static {Timing} {Analysis} for {Hard}
  {Real}-{Time} {Systems},'' in \emph{VMCAI}, 2010.

\bibitem{li_performance_1995}
Y.-T.~S. Li and S.~Malik, ``Performance analysis of embedded software using
  implicit path enumeration,'' in \emph{{ACM} {SIGPLAN} LCTES}, 1995.

\bibitem{xianfeng_li_modeling_2004}
X.~Li, A.~Roychoudhury, and T.~Mitra, ``Modeling out-of-order processors for
  software timing analysis,'' in \emph{{IEEE} RTSS}, 2004.

\bibitem{rochange_context-parameterized_2009}
C.~Rochange and P.~Sainrat, ``A {Context}-{Parameterized} {Model} for {Static}
  {Analysis} of {Execution} {Times},'' in \emph{HIPEAC {II}}, ser. LNCS, 2009.

\bibitem{bdd-akers}
S.~B. Akers, ``Binary decision diagrams,'' \emph{IEEE Transactions on
  computers}, vol.~27, no.~06, pp. 509--516, 1978.

\bibitem{bdd-bryant}
R.~E. Bryant, ``Symbolic boolean manipulation with ordered binary-decision
  diagrams,'' \emph{ACM Computing Surveys (CSUR)}, 1992.

\bibitem{fujita1997multi}
M.~Fujita, P.~C. McGeer, and J.-Y. Yang, ``Multi-terminal binary decision
  diagrams: An efficient data structure for matrix representation,''
  \emph{Formal methods in system design}, vol.~10, no.~2, pp. 149--169, 1997.

\bibitem{otawa}
C.~Ballabriga, H.~Cass{\'e}, C.~Rochange, and P.~Sainrat, ``Otawa: An open
  toolbox for adaptive wcet analysis,'' in \emph{IFIP SEUS}, 2010.

\bibitem{taclebench}
H.~Falk, S.~Altmeyer, P.~Hellinckx, B.~Lisper, W.~Puffitsch, C.~Rochange,
  M.~Schoeberl, R.~B. S{\o}rensen, P.~W{\"a}gemann, and S.~Wegener,
  ``Taclebench: A benchmark collection to support worst-case execution time
  research,'' in \emph{WCET}, 2016.

\bibitem{schneider_pipeline_1999}
J.~Schneider and C.~Ferdinand, ``Pipeline behavior prediction for superscalar
  processors by abstract interpretation,'' \emph{ACM SIGPLAN Notices}, vol.~34,
  no.~7, pp. 35--44, May 1999.

\bibitem{thesing_safe_2004}
S.~Thesing, ``Safe and precise wcet determination by abstract interpretation of
  pipeline models,'' Ph.D. dissertation, Univ. Saarland, 2004.

\bibitem{abstract-interpretation0}
P.~Cousot and R.~Cousot, ``Abstract interpretation: a unified lattice model for
  static analysis of programs by construction or approximation of fixpoints,''
  in \emph{4th ACM SIGACT-SIGPLAN symposium on PLDI}, 1977.

\bibitem{wilhelm2011symbolic}
S.~Wilhelm, ``Symbolic representations in {WCET} analysis,'' Ph.D.
  dissertation, Saarland University, 2012.

\bibitem{reineke_sound_2009}
J.~Reineke and R.~Sen, ``Sound and {Efficient} {WCET} {Analysis} in the
  {Presence} of {Timing} {Anomalies},'' in \emph{{WCET}'09}, 2009.

\bibitem{hahn_toward_2015}
S.~Hahn, J.~Reineke, and R.~Wilhelm, ``Toward {Compact} {Abstractions} for
  {Processor} {Pipelines},'' ser. LNCS, 2015, vol. 9360, pp. 205--220.

\bibitem{hahn_design_2020}
S.~Hahn and J.~Reineke, ``Design and analysis of {SIC}: a provably
  timing-predictable pipelined processor core,'' \emph{Real-Time Systems},
  vol.~56, 2020.

\bibitem{engblom_processor_2002-1}
J.~Engblom and B.~Jonsson, ``Processor pipelines and their properties for
  static {WCET} analysis,'' in \emph{{EMSOFT}}, ser. LNCS, 2002.

\bibitem{cassez-model-checking-timed-games}
F.~Cassez, ``Timed games for computing wcet for pipelined processors with
  caches,'' in \emph{IEEE ACSD}, 2011.

\bibitem{tic-model-checking}
R.~Metta, M.~Becker, P.~Bokil, S.~Chakraborty, and R.~Venkatesh, ``Tic: a
  scalable model checking based approach to wcet estimation,'' \emph{ACM
  SIGPLAN Notices}, vol.~51, no.~5, pp. 72--81, 2016.

\bibitem{cassez-model-checking}
F.~Cassez and J.-L. B{\'e}chennec, ``Timing analysis of binary programs with
  uppaal,'' in \emph{IEEE ACSD}, 2013.

\bibitem{dalsgaard-model-checking}
A.~E. Dalsgaard, M.~C. Olesen, M.~Toft, R.~R. Hansen, and K.~G. Larsen,
  ``Metamoc: Modular execution time analysis using model checking,'' in
  \emph{WCET}, 2010.

\end{thebibliography}

\end{document}